\newtheorem{definition}{Definition}
\newtheorem{lemma}{Lemma}
\newtheorem{theorem}{Theorem}
\definecolor{ltgray}{rgb}{0.95,0.95,0.95}
\tikzset{
  treenode/.style = {align=center, inner sep=0pt, circle, text centered, minimum size=0.4cm, font=\fontsize{8}{22.4}\sffamily },
  youngnode/.style = {align=center, inner sep=0pt, text centered, minimum size=0.4cm, font=\fontsize{2}{10.4}\sffamily },
  arn_n/.style = {treenode, circle, white, draw=black,
    fill=black, text width=1em},
  arn_r/.style = {treenode, circle, white, draw=red, fill=red, 
    text width=1em, thick},
  arn_x/.style = {treenode, rectangle, draw=black,
    minimum width=0.5em, minimum height=0.5em}

}
\begin{document}
\title{A Classical Algorithm for Quantum $\textsf{SU}(2)$ Schur Sampling}

\author{\foreignlanguage{czech}{Vojtěch Havlíček}}
\email{vojtech.havlicek@keble.ox.ac.uk}
\affiliation{Department of Computer Science, University of Oxford, Wolfson Building, Parks Road, Oxford
OX1 3QD, UK}  

\author{Sergii Strelchuk}
\affiliation{Department of Applied Mathematics and Theoretical Physics, University of Cambridge, Wilberforce Road,  Cambridge, CB2 3HU, UK} 

\author{Kristan Temme}
\affiliation{IBM T.J. Watson Research Center, Yorktown Heights, NY 10598, USA}

\begin{abstract} 
Many quantum algorithms can be represented in a form of a classical circuit positioned between quantum Fourier transformations. Motivated by the search for new quantum algorithms, we turn to circuits where the latter transformation is replaced by the $\textsf{SU}(2)$ quantum Schur Transform -- a global transformation which maps the computational basis to a basis defined by angular momenta. We show that the output distributions of these circuits can be approximately classically sampled in polynomial time if they are sufficiently close to being sparse, thus isolating a regime in which these Quantum $\textsf{SU}(2)$ Schur Circuits could lead to algorithms with exponential computational advantage. 
Our work is primarily motivated by a conjecture that underpinned the hardness of Permutational Quantum Computing, a restricted quantum computational model that has the above circuit structure in one of its computationally interesting regimes. The conjecture stated that approximating transition amplitudes of Permutational Quantum Computing model to inverse polynomial precision on a classical computer is computationally hard. We disprove the extended version of this conjecture -- even in the case when the hardness of approximation originated from a difficulty of finding the large elements in the output probability distributions. 
Finally, we present some evidence that output of the above Permutational Quantum Computing circuits could be efficiently approximately sampled from on a classical computer. 
\end{abstract}
\maketitle

\section{Introduction}

Charaterizing the power of quantum computers is one of the two major challenges in quantum computation, with the other being their scalable implementation. A seminal approach to the former problem is the study of conditions which make quantum algorithms amenable to methods of efficient classical simulation. 
A number of important quantum algorithms can be cast in a form of classical circuit positioned between a pair of circuits which implement quantum Fourier transformation. These are, for example, algorithms for the Hidden Subgroup Problem which in particular include the Shor's factoring algorithm \cite{Shor94,Kitaev95}. While the latter provides strong evidence that quantum computers outperform the classical ones, Schwarz and van den Nest \cite{Schwarz13} showed that the respective quantum circuit could be efficiently classically simulated if its output distribution was sufficiently close to being sparse. 
\begin{figure}[t]
\includegraphics[scale=.85]{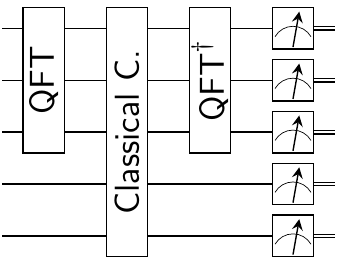}~\hspace{.5cm}
\includegraphics[scale=.85]{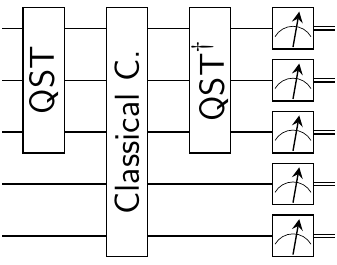}
\caption{Schematic diagrams of the quantum circuit used in Shor's factoring algorithm (Left) and the circuits we consider here (Right). QST denotes the $\textsf{SU}(2)$ Quantum Schur Transformation. The classical circuits between the transforms can represent, for example a polynomially-long sequence of Toffoli gates.}
\label{Fig:Circuits}
\end{figure}

In our current work, we aim to characterize a { different} class of circuits that instead of the quantum Fourier transform contain the quantum Schur transform (QST) as depicted on Fig.~\ref{Fig:Circuits}. QST is a map from the computational basis to a basis defined by angular momentum \cite{Harrow05, Bacon06, Bacon06b} and it underpins a variety of quantum information processing tasks, including spectrum estimation \cite{Gill00, Keyl01}, hypothesis testing \cite{Hayashi01,Hayashi02,Hayashi02b,Hayashi02c}, quantum computing using decoherence-free subspaces \cite{Kempe01}, communication without a shared reference frame \cite{BRS03, BRS07}, and quantum color coding \cite{HHH05}. A quantum circuit that efficiently implements this transform was first described in \cite{Harrow05,Bacon06,Bacon06b} and recently improved by Kirby and Strauch \cite{Kirby17, Kirby17b}. 
The extent to which circuits using QST could be used to devise new quantum algorithms is, to our knowledge, largely unexplored - possibly with the exception of \cite{Childs06} and \cite{Jordan09}. 

\begin{figure*}[t]
\includegraphics[scale=0.5]{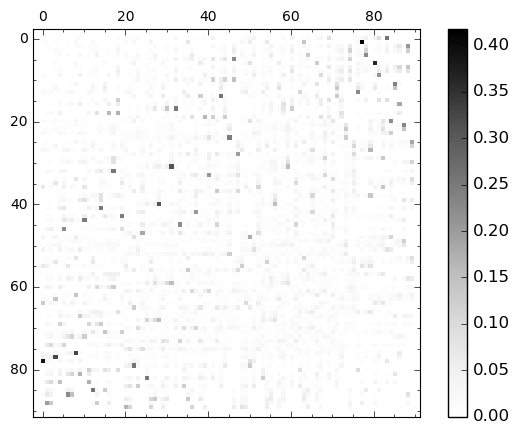}~\hspace{2cm}
\includegraphics[scale=0.5]{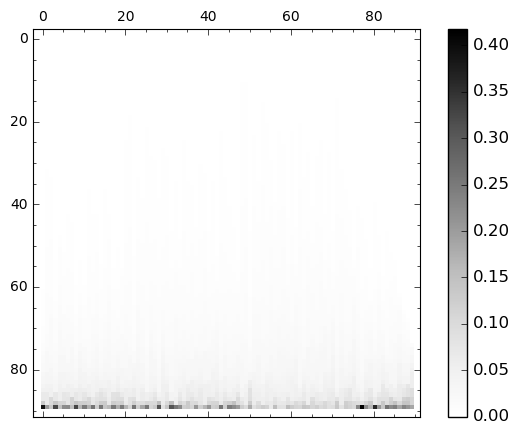}
\caption{(Left) A part of the $|\braket{x|C|y}|^2$ matrix for a typical PQC instance. After normalization, most matrix elements are indistinguishable from zeros within the polynomially small approximation window. We show how to classically find the large elements. (Right) The output matrix with sorted output, demonstrating that an overwhelming fraction of the probabilities are usually small.}
\label{Fig:PQC-SEQ}
\end{figure*}

QST is a centerpiece in the analysis of Permutational Quantum Computing (PQC) \cite{Havlicek18} -- a restricted quantum computational model based on recoupling of angular momenta~\cite{Jordan09,Marzuoli05}. It has been conjectured that PQC has supra-classical computational power. One of the conjectures supporting this belief stated that an approximation of its transition amplitudes in the regime where they encode matrix elements of the symmetric group irreps in the Young's orthogonal form~\cite{Jordan08, Jordan09} is hard to compute classically if we require inverse polynomial precision (in the number of input qubits). While in our previous work we presented an efficient classical algorithm for approximating such transition amplitudes~\cite{Havlicek18}, an intriguing question remained: Is it also possible to identify all PQC transition amplitudes that can be approximated using classical methods with the inverse polynomial precision? Since the expected output probability of an $n$-qubit quantum circuit $C$ with an input state $\ket{y}$ is given by:
\begin{align*}
\mathbb{E}_x\left(|\braket{x|C|y}|^2\right) &=  \frac{1}{2^n} \sum_x |\braket{x|C|y}|^2 = \frac{1}{2^n},
\end{align*}
approximating these values with an inverse polynomial precision cannot distinguish the majority of $\braket{x|C|y}$ amplitudes from zeroes (see Fig.~\ref{Fig:PQC-SEQ}). 
Could we exploit the difficulty that arises from finding large matrix elements encoded in the output of the algorithm and thus demonstrate the (exponential) quantum computational advantage? 

We show that this is not the case by describing a classical method that finds all large output probabilities in polynomial time. 
Our proof technique uses the simulation technique of Schwarz and van den Nest~\cite{Schwarz13} where the authors studied analogous problem in the context of the quantum Fourier transform. This approach uses a variant of the Kushilevitz-Mansour algorithm used in classical learning theory \cite{kushilevitz1993learning,Goldreich89}. We adapt it for distributions arising in the class of circuits using the QST, which include the relevant regime of Permutational Quantum Computing.  We then show how to classically approximately sample their output distributions. The sampling algorithm becomes efficient for output distributions that are sufficiently close to sparse.

Our results additionally imply that sampling from the quantum Schur circuits can only lead to exponential computational advantage if the individual elements of the output distribution \textit{cannot} be resolved by polynomial approximation with the quantum device by taking polynomially many samples. A way to circumvent this restriction, similarly to the case of circuits that use the quantum Fourier transform, could be to use a technique utilized in the Shor's algorithm that reconstructs group generators by sampling $\log |G|$ group elements for a super-polynomially large $|G|$. There is no meaningful counterpart to this approach for the QST as of now.

\section{Quantum $\textsf{SU}(2)$ Schur Sampling}

\begin{figure}[t]
\begin{tikzpicture}[scale = 1.3]
    \node[circle, inner sep=2pt] at(-.65, 0.75) {$1$};
    \node[circle, inner sep=2pt] at(.65, 0.75) {$3$};
    \node[circle, inner sep=2pt] at(1.3, 0.75) {$n-1$};
    \node[circle, inner sep=2pt] at(1.95, 0.75) {$n$};
    \node[circle, inner sep=2pt] at(-0.0, 0.75) {$2$};
    \node at(-.55, -0.15) {$S^2_{[2]}$};
    \node at(-.15, -0.55) {$S^2_{[3]}$};
    \node at(.15, -1.1) {$S^2_{[n-1]}$};
    \node at(.6, -1.8) {$S^2_{[n]}, Z$};
    
    \draw[color=gray,thick] (-0.65, 0.6) -- (-0.3, 0.1); 
    \draw[color=gray,thick] (-0, 0.6) -- (-0.3, 0.1); 
    \draw[color=gray,thick] (0.65, 0.6) -- (0.3, 0.1); 
    \draw[color=gray,thick] (1.3, 0.6) -- (0.3, -.7); 
     \draw[color=gray,thick] (1.95, 0.6) -- (0.6, -1.1); 
    \draw[color=gray,thick] (-0.3,0.1)  -- (0, -.3);
    \draw[color=gray,thick, dashed] (0,-0.3)  -- (0.3, -.7); %
    \draw[color=gray,thick] (0.3,-.7)  -- (0.6, -1.1);%
    \draw[color=gray,thick] (0.3,0.1) -- (0, -0.3);
    \draw[color=gray,thick] (0.6, -1.1) -- (0.6, -1.5);
    
     \node[circle, fill = black,inner sep=1pt] at (-0.3, 0.1){};
    \node[circle, fill = black,inner sep=1pt] at (0, -0.3){};
    \node[circle, fill = black,inner sep=1pt] at (0.3, -0.7){};
    \node[circle, fill = black,inner sep=1pt] at (0.6, -1.1){};
\end{tikzpicture}
\caption{Sequentially coupled basis on $n$ qubits. The numbers at the leaf nodes label qubits. Every vertex $\bullet$ carries a total spin operator $S^2_A$, that forces qubits in set $A$ to one of its eigenstates.
Similar diagrams can be used to label basis states and are shown in Appendix~\ref{Appendix:StateDiagrams} or \cite{Jordan09}.}
\label{Fig:SequentiallyCoupled}
\end{figure}
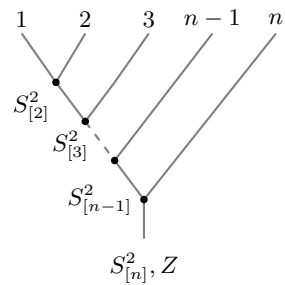

\begin{figure*}[t]
\begin{tikzpicture}[scale=.8]
\draw [->](.75,0)--(.75,5) node[left]{$J$};
\draw [->](.75,0)--(5.5,0) node[right]{$n$};
\foreach \x/\xtext in {0.5/1, 1/2, 1.5/3, 2/4, 2.5/5, 3/6}
{\draw (1.5*\x cm,1pt ) -- (1.5*\x cm,-1pt ) node[anchor=north] at(1.5*\x cm, -0.3) {$\xtext$};}

\node[treenode] at(0.75,0.75) (1) {$\frac{1}{2}$};
\node[treenode] at(1.5,1.5) (2) {$1$};
\node[treenode] at(2.25,2.25) (3) {$\frac{3}{2}$};
\node[treenode] at(3,3) (4) {$2$};
\node[treenode] at(3.75,3.75) (5) {$\frac{5}{2}$};
\node[treenode] at(4.5,4.5) (6) {$3$};

\node[treenode] at(1.5,0) (7) {$0$};
\node[treenode] at(2.25,0.75) (8) {$\frac{1}{2}$};
\node[treenode] at(3,1.5) (9) {$1$};
\node[treenode] at(3.75,2.25) (10) {$\frac{3}{2}$};
\node[treenode] at(4.5,3) (11) {$2$};

\node[treenode] at(3,0) (12) {$0$};
\node[treenode] at(3.75,.75) (13) {$\frac{1}{2}$};
\node[treenode] at(4.5,1.5) (14) {$1$};

\node[treenode] at(4.5,0) (15) {$0$};

\node[] at (0,0) (0) {};

\draw[->] (1) edge (2) (2) edge (3) (3) edge (4) (4) edge (5)(5) edge (6);
\draw[->,thick] (7) edge (8) (8) edge (9); 
\draw[->] (9) edge (10) (10) edge (11);
\draw[->] (12) edge (13) (13) edge (14);
\draw[->,thick] (1) edge (7);
\draw[->] (2) edge (8)(8) edge (12); 
\draw[->] (3) edge (9)(9) edge (13) (13) edge (15);
\draw[->] (4) edge (10)(10) edge (14);
\draw[->] (5) edge (11);
\draw[->] (6);
\end{tikzpicture}
\hspace{3cm}
\begin{tikzpicture}[scale=.8]
\draw [->](.75,0)--(.75,5) node[left]{$J$};
\draw [->](.75,0)--(5.5,0) node[right]{$n$};
\foreach \x/\xtext in {0.5/1, 1/2, 1.5/3, 2/4, 2.5/5, 3/6}
{\draw (1.5*\x cm,1pt ) -- (1.5*\x cm,-1pt ) node[anchor=north] at(1.5*\x cm, -0.3) {$\xtext$};}

\node[youngnode] at(0.75,0.75) (1) {$\yng(1)$};
\node[youngnode] at(1.5,1.5) (2) {$\yng(2)$};
\node[youngnode] at(2.25,2.25) (3) {$\yng(3)$};
\node[youngnode] at(3,3) (4) {$\yng(4)$};
\node[youngnode] at(3.75,3.75) (5) {$\yng(5)$};
\node[youngnode] at(4.5,4.5) (6) {$\yng(6)$};

\node[youngnode] at(1.5,0) (7) {$\yng(1,1)$};
\node[youngnode] at(2.25,0.75) (8) {$\yng(2,1)$};
\node[youngnode] at(3,1.5) (9) {$\yng(3,1)$};
\node[youngnode] at(3.75,2.25) (10) {$\yng(4,1)$};
\node[youngnode] at(4.5,3) (11) {$\yng(5,1)$};

\node[youngnode] at(3,0) (12) {$\yng(2,2)$};
\node[youngnode] at(3.75,.75) (13) {$\yng(3,2)$};
\node[youngnode] at(4.5,1.5) (14) {$\yng(4,2)$};

\node[youngnode] at(4.5,0) (15) {$\yng(3,3)$};

\node[] at (0,0) (0) {};

\draw[->] (1) edge (2) (2) edge (3) (3) edge (4) (4) edge (5)(5) edge (6);
\draw[->] (7) edge (8) (8) edge (9)(9) edge (10) (10) edge (11);
\draw[->] (12) edge (13) (13) edge (14);
\draw[->] (1) edge (7);
\draw[->] (2) edge (8)(8) edge (12); 
\draw[->] (3) edge (9)(9) edge (13) (13) edge (15);
\draw[->] (4) edge (10)(10) edge (14);
\draw[->] (5) edge (11);
\draw[->] (6);
\end{tikzpicture}

\caption{The branching diagram. The highlighted path $\bm J = \left[\frac{1}{2} \rightarrow 0 \rightarrow \frac{1}{2} \rightarrow 1\right]$ corresponds to a set of five $4$-qubit quantum states: $\Ket{\bm J, M} = \Ket{M, J= 1, j_{[3]}=\frac{1}{2}, j_{[2]} = 0}$ with $M \in \lbrace -2, -1, 0, 1, 2 \rbrace$. The path takes the following sequence of steps: $\searrow,\nearrow,\nearrow$ and correponds to a Yamanouchi symbol $011$. Yamanouchi symbols are used in representation theory of the Symmetric group, which is made explicit by the diagram on the right, showing that each branching diagram node can be also labelled with the Young diagrams on two rows. Every Young diagram with $n$ boxes has $\frac{n}{2} + J$ boxes in the top and $\frac{n}{2}-J$ boxes in the bottom row labels a set of paths from $\mathcal{A}_n$ that end at the same $J$. As detailed in Appendix~\ref{App:PathsToYD}, the individual paths can be shown to be bijective with standard Young tableaux with two rows. We use to improve the sampling algorithm in Section \ref{Sec:ApproxSampl}.}
\label{Fig:Bratteli}
\end{figure*}
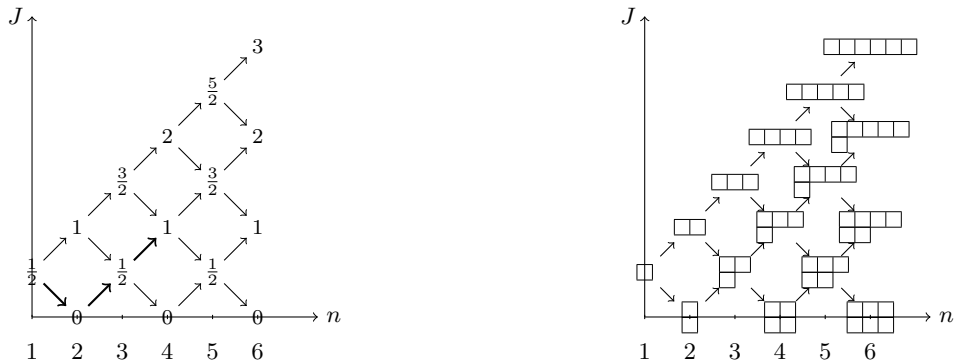

The studied circuits are derived from the Permutational Quantum Computing - a computational model based on recoupling of angular momenta \cite{Jordan09, Marzuoli05}. We hence review the basics of the angular momentum theory before introducing them. Consider $n$ qubits indexed by
$ [n] := \lbrace 1, 2 \ldots n \rbrace$.
The spin of the $k$-th qubit is defined by a triple of operators: 
\begin{align*}
\vec{S}_k &= \frac{1}{2} \left( X_k, \, Y_k, \,Z_k \right),
\end{align*}
where $X_k,Y_k,Z_k$ denote the Pauli $X,Y,Z$ operators on the $k$-th qubit. The \textit{total spin operator} on a qubit subset  $A \subseteq [n]$ is given by:
\begin{align*} S_A^2 &:= \sum_{k \in A} \vec{S}_k  \cdot  \sum_{k' \in A} \vec{S}_{k'} . \end{align*}
We write $S^2 := S^2_{[n]}$. The operators $S_A^2$ and $S_B^2$ commute if and only if the sets $A$ and $B$ are disjoint or one is contained in the other. 
Let: \begin{align*} Z_A &:=  \frac{1}{2} \sum_{k \in A} Z_k, \end{align*} denote the \textit{azimuthal spin operator} on a qubit subset $A$. We again use $Z_{[n]} := Z$.  
The operators $Z_A$ and $S^2_A$ commute for any $A \subseteq [n]$ 
and share an eigenspace labeled by quantum numbers $j_A$~and~$m_A$. The quantum number $j_A$ is the \textit{total spin} of qubits in $A$ and $m_A$ is the \textit{azimuthal spin}. Both spin numbers are subject to constraints: the azimuthal spin $m_A$ only takes values in integer steps between $-j_A $ and $j_A$, while the total spin numbers are either integer or half-integer and combine according to the angular momentum addition rules~\cite{Woit17,Sakurai94}: 
\begin{align}
j_{A \cup B}  \in \left\lbrace |j_A - j_B|,\, |j_A - j_B|+1,  \ldots ,\, j_A + j_B \right\rbrace. \label{Eq:coupling}
\end{align}

Sets of commuting spin operators can be used to define complete orthonormal bases \cite{Jordan09}. A particular basis is given by coupling a qubit at a time; that is by the joint eigenstates of: \begin{align*} 
 S_{ [2]}^2, \,S_{ [3] }^2, \, \ldots S^2, Z. \end{align*} We call it the \textit{sequentially coupled basis}. The basis states are labeled by eigenstates $j_{[2]}, j_{[3]} \ldots, j_{[n-1]}, J$ and $M$ of the spin operators. By Eq.~\ref{Eq:coupling}, these are subject to:
 \begin{align} j_{[1]} &= \frac{1}{2}, & j_{[k+1]} &=  \left|j_{[k]} \pm \frac{1}{2}\right| \label{Eq:couplingSeq}, \end{align}
which can be expressed diagrammatically by a \textit{branching diagram} (Fig.~\ref{Fig:Bratteli}). 
Up to the quantum number $M$, the sequential basis states correspond to paths in this diagram that start at $j_{[1]} = \frac{1}{2}$.

Let $\mathcal{A}_k$ be the set of all such paths on $k$ qubits. Any path $\bm j \in \mathcal{A}_k$ can be labelled by a bitstring by writing $1$ for any $\nearrow$ edge of the path and $0$ for an $\searrow$ edge of the path $\bm j$ in the branching diagram. For example:
\begin{align*}
\left[\frac{1}{2} \rightarrow 1 \rightarrow \frac{1}{2} \rightarrow 1 \right] \mapsto 101.
\end{align*}

Any prefix of length $m \leq k-1$  in such a bitstring contains at most  $\lceil \frac{m}{2} \rceil$ zeroes, since the path never goes below the horizontal axis of the branching diagram. 
These bitstrings play a role in the representation theory of the symmetric group and are called Yamanouchi symbols \cite{Pauncz67,Coleman68}. The sets of Yamanouchi symbols with the same Hamming weight correspond to Young diagrams on two rows, which can be seen in Fig.~\ref{Fig:Bratteli}. This is underpinned by the $\textsf{SU}(2)$ Schur-Weyl duality, that states that the $n$-qubit Hilbert space decomposes into the tensor product of the symmetric group $S_n$ modules (isomorphic to the Young diagrams on two rows) and the special unitary group $\textsf{SU}(2)$ under their joint action. 

 See Appendix~\ref{App:PathsToYD} for additional details of this correspondence and \cite{Kirby17b,Harrow05} for detailed discussion of the underlying representation theory.

\label{Subsection:QuantumSchurTransform}
For the sequentially coupled basis, the $\textsf{SU}(2)$ Schur-Weyl duality gives the $\textsf{SU}(2)$ Quantum Schur Transform as described in \cite{Bacon06,Bacon06b, Kirby17,Harrow05,Kirby17b,Krovi18}. It is a sequence of the Clebsch-Gordan transformations, that couple $j$ and $j'$ eigenspaces into a $\ket{J,M, j, j'}$ state by:
\begin{align*}
\ket{J,M, j, j'} &= \sum_{m, m'} C^{J,M}_{j,m ; \; j',m'} \ket{j,m} \ket{j',m'},
\end{align*}
where the summation over $m$ runs from $-j$ to $j$ in integer steps (and similarly for $m'$) and the $C^{J,M}_{j,m;\; j',m'}$ are the Clebsch-Gordan coefficients. The transform between the computational and the sequentially coupled basis is given by a cascade of the Clebsch-Gordan transforms \cite{Harrow05, Kirby17}. 
For example on $3$ qubits:
\begin{align*}
&\ket{J,M, j_{[2]}} \\
&= \sum_{m_1,m_2} \sum_{m_{[2]},m_3} C^{J,M}_{j_{[2]},m_{[2]}; \frac{1}{2}, m_3} C^{j_{[2]}, m_{[2]}}_{\frac{1}{2}, m_1 ;  \frac{1}{2}, m_2} \ket{m_1 m_2 m_3} \\
&= \sum_{m_1 m_2 m_3} \left[ {U_\textsf{Sch}} \right]^{J, M, j_{[2]}}_{m_1m_2m_3} \ket{m_1 m_2 m_3}.
\end{align*}
where we omitted the $j = \frac{1}{2}$ numbers for qubits for brevity.
The extension to the $n \geq 3$ qubit case is straightforward. We label the sequentially coupled basis states on $n$ qubits by $\ket{\bm J, M}$, where $\bm J$ is a path in $\mathcal{A}_n$.  

Permutational Quantum Computing in the sequentially coupled basis uses the permutation gate between two sequentially coupled basis states. Its transition amplitudes are:
\begin{align*}
\braket{\bm J,M | U_\pi|\bm J',M'},
\end{align*}
where the permutation gate $U_\pi$ is defined by its action on a computational basis state $\ket{x_1 \ldots x_n}$ as:
\begin{align*}
U_\pi \ket{x_1 x_2 x_3 \ldots x_n} &= \ket{x_{\pi(1)} x_{\pi(2)} x_{\pi(3)} \ldots x_{\pi(n)}}.
\end{align*}

\begin{figure}[t]
\begin{tikzpicture}[scale = 1]
    \node[circle, inner sep=2pt] at(-.65, 0.75) (1B) {$1$};
    \node[circle, inner sep=2pt] at(.65, 0.75) (3B) {$3$};
    \node[circle, inner sep=2pt] at(1.3, 0.75) (4B) {$4$};
    \node[circle, inner sep=2pt] at(1.95, 0.75) (5B) {$5$};
    \node[circle, inner sep=2pt] at(-0.0, 0.75) (2B) {$2$};

    \node at(-.55, -0.15) {$j_{[2]}$};
    \node at(-.15, -0.55) {$j_{[3]}$};
    \node at(.15, -1.1) {$j_{[4]}$};
    \node at(.6, -1.8) {$J, M$};
    
    \draw[color=gray,thick] (-0.65, 0.6) -- (-0.3, 0.1); 
    \draw[color=gray,thick] (-0, 0.6) -- (-0.3, 0.1); 
    \draw[color=gray,thick] (0.65, 0.6) -- (0.3, 0.1); 
    \draw[color=gray,thick] (1.3, 0.6) -- (0.3, -.7); 
    \draw[color=gray,thick] (1.95, 0.6) -- (0.6, -1.1); 
    \draw[color=gray,thick] (-0.3,0.1)  -- (0, -.3);
    \draw[color=gray,thick] (0,-0.3)  -- (0.3, -.7); %
    \draw[color=gray,thick] (0.3,-.7)  -- (0.6, -1.1);%
    \draw[color=gray,thick] (0.3,0.1) -- (0, -0.3);
    \draw[color=gray,thick] (0.6, -1.1) -- (0.6, -1.5);
    
    \node[circle, fill = black,inner sep=1pt] at (-0.3, 0.1){};
    \node[circle, fill = black,inner sep=1pt] at (0, -0.3){};
    \node[circle, fill = black,inner sep=1pt] at (0.3, -0.7){};
    \node[circle, fill = black,inner sep=1pt] at (0.6, -1.1){};

    \node[circle, inner sep=2pt] at(-.65, 1.75) (1U) {$1$};
    \node[circle, inner sep=2pt] at(.65, 1.75) (3U) {$3$};
    \node[circle, inner sep=2pt] at(1.3, 1.75) (4U) {$4$};
    \node[circle, inner sep=2pt] at(1.95, 1.75) (5U) {$5$};
    \node[circle, inner sep=2pt] at(-0.0, 1.75) (2U) {$2$};
   
    \draw[color=gray,thick] (-0.65, 1.9) -- (-0.3, 2.4); 
    \draw[color=gray,thick] (0, 1.9) -- (-0.3, 2.4); 
    \draw[color=gray,thick] (0.65, 1.9) -- (0.3, 2.4); 
    \draw[color=gray,thick] (1.3, 1.9) -- (0.3, 3.4); 
    \draw[color=gray,thick] (1.95, 1.9) -- (0.6, 3.9); 
    \draw[color=gray,thick] (-0.3,2.4)  -- (0, 2.9);
    \draw[color=gray,thick] (0,2.9)  -- (0.3, 3.4); %
    \draw[color=gray,thick] (0.3,3.4)  -- (0.6, 3.9);%
    \draw[color=gray,thick] (0.3,2.4) -- (0, 2.9);
    \draw[color=gray,thick] (0.6, 3.9) -- (0.6, 4.6);
    
     \node[circle, fill = black,inner sep=1pt] at (-0.3, 2.4){};
    \node[circle, fill = black,inner sep=1pt] at (0, 2.9){};
    \node[circle, fill = black,inner sep=1pt] at (0.3, 3.4){};
    \node[circle, fill = black,inner sep=1pt] at (0.6, 3.9){};
    
    \node at(-.5, 2.8) {$j_{[2]}'$};
    \node at(-.15, 3.4) {$j_{[3]}'$};
    \node at(.16, 3.9) {$j_{[4]}'$};
    \node at(.6, 4.8) {$J', M'$};
    
    \draw (1B) -- (3U);
    \draw (2B) -- (1U);
    \draw (3B) -- (2U);
    \draw (4B) -- (5U);
    \draw (5B) -- (4U);

\end{tikzpicture}
\caption{Schematic representation of Permutational Quantum Computing in the sequentially coupled basis. The applied permutation is $(1,2,3)(4,5)$. } \label{Fig:PQC}
\end{figure}
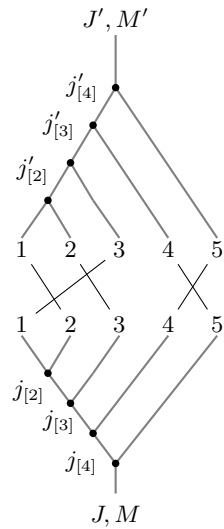

Both $Z$ and $S^2$ operators commute with $U_\pi$ and consequently, $M = M'$ and $J = J'$. 
The matrix $\braket{\bm J,M | U_\pi|\bm J',M'}$ block-diagonalizes to $J,M$ blocks; each of which corresponds the an irreducible representation of the symmetric group in the Young's orthogonal form. The transition amplitudes are then the matrix elements of these matrices \cite{Jordan09}.  Approximating them to polynomial precision was conjectured hard classically in \cite{Jordan09, Jordan08} but an efficient classical algorithm was found in \cite{Havlicek18}. 

The methods we present here work for a broader family of quantum circuits we call the $\textsf{SU}(2)$ \textit{Quantum Schur Sampling} circuits. These have transition amplitudes: \begin{align*}
\braket{\bm J,M |W|\bm J',M'} ,
\end{align*}
where $W$ is defined by its action on a computational basis state $\ket{x}, x \in \lbrace 0,1 \rbrace ^n $:
\begin{align*}
W\ket{x} &= \ket{w(x)},
\end{align*}
with $w: \lbrace 0, 1\rbrace^n \rightarrow \lbrace 0,1 \rbrace^n$ being a classical function given by a sequence of Toffoli gates -- we consider only such $W$ where this sequence is $\poly(n)$ long \footnote{These circuits extend the notion of quantum Schur sampling circuits introduced in \cite{Havlicek18}, where we only considered circuits of the form $\braket{\bm J,M |\Lambda|\bm J',M'}$ with $\Lambda$ being a $Z$-diagonal gate with efficiently computable elements. Our technique works for these circuits as well.}.

The circuits become similar in structure to Shor's algorithm in a sense of Fig.~\ref{Fig:Circuits} if we allow for ancillary qubits. The simulation results apply also to these circuits, which we discuss in Section~\ref{Section:Ancillas}.

\section{Finding Large Probabilities}
We now describe an algorithm for finding large probabilities in the output of the circuits (see Fig.~\ref{Fig:PQC-SEQ}).
Our approach is built on the concept of computational tractability introduced in \cite{VanDenNest09}: 

\begin{definition}\label{def:tractability}
An $n$-qubit state $\ket{\psi}$ is computationally tractable (CT) if it is possible to classically sample from the distribution:
\begin{align*} p(x) &= \lbrace |\braket{x | \phi} |^2: x \in \lbrace 0,1\rbrace^n \rbrace, \end{align*}  in polynomial time and
the overlaps $\braket{x | \phi}$ can be computed to exponential precision for a computational basis state $\ket{x}$ in polynomial time.
\end{definition}

We proved in \cite{Havlicek18} that the sequentially coupled basis states are CT. As a corollary, we show that $\ket{\phi} = W \ket{\bm J,M}$ is also CT: 
\begin{lemma}
 $\ket{\phi} = W\ket{\bm J, M}$ is CT. 
\label{Lemma:CTExtension}
\end{lemma}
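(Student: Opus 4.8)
The plan is to leverage the fact, proved in \cite{Havlicek18}, that the sequentially coupled basis state $\ket{\bm J, M}$ is itself CT, and then argue that applying the permutation-type circuit $W$ (a $\poly(n)$-long sequence of Toffoli gates implementing a classical bijection $w$ on $\{0,1\}^n$) preserves the CT property. There are two things to check against Definition~\ref{def:tractability}: (i) that we can sample from $p(x) = |\braket{x|\phi}|^2$ in polynomial time, and (ii) that we can compute the individual overlaps $\braket{x|\phi}$ to exponential precision in polynomial time.

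For the overlap computation, the key observation is that $W$ is a permutation of the computational basis: $W\ket{y} = \ket{w(y)}$ with $w$ a bijection, so $W$ is a permutation matrix and $W^\dagger \ket{x} = \ket{w^{-1}(x)}$. Hence $\braket{x|\phi} = \braket{x|W|\bm J,M} = \braket{w^{-1}(x)|\bm J,M}$. Since the Toffoli sequence has polynomial length, both $w$ and $w^{-1}$ are computable in polynomial time (run the circuit forwards, resp.\ backwards), and $\braket{w^{-1}(x)|\bm J,M}$ is then an overlap of a computational basis state with a sequentially coupled basis state, which is computable to exponential precision in polynomial time by the CT property of $\ket{\bm J,M}$ established in \cite{Havlicek18}. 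This handles (ii).

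For the sampling task, I would note that the distribution $p(x) = |\braket{x|W|\bm J,M}|^2 = |\braket{w^{-1}(x)|\bm J,M}|^2$ is just the pushforward under the bijection $w$ of the distribution $q(y) = |\braket{y|\bm J,M}|^2$. So the sampling algorithm is: draw $y \sim q$ using the efficient sampler for $\ket{\bm J,M}$ guaranteed by \cite{Havlicek18}, compute $x = w(y)$ by running the Toffoli circuit forward, and output $x$. This runs in polynomial time and produces exactly a sample from $p$, establishing (i). Combining (i) and (ii) gives that $\ket{\phi} = W\ket{\bm J,M}$ is CT.

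I do not expect a genuine obstacle here; the statement is essentially the observation that CT states are closed under efficiently-invertible classical (basis-permuting) circuits, and the only things to be careful about are that the Toffoli sequence is invertible in polynomial time (true, since each Toffoli is its own inverse, so $W^{-1}$ is the reversed sequence) and that precision is not degraded (it is not, since $w$ and $w^{-1}$ are computed exactly on bitstrings, and the only approximation is inherited verbatim from the overlap computation for $\ket{\bm J,M}$). The mild subtlety worth a sentence in the full proof is simply spelling out that $W$ being a permutation matrix is what lets us move it onto the bra side as $w^{-1}$.
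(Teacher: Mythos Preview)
Your proposal is correct and follows essentially the same route as the paper: reduce both the overlap computation and the sampling task for $W\ket{\bm J,M}$ to those for $\ket{\bm J,M}$ by using that $W$ acts as the classical bijection $w$ on computational basis states, with $w$ and $w^{-1}$ computable in polynomial time from the Toffoli description. The only cosmetic difference is that you phrase the sampling step as ``draw $y$ from $|\braket{y|\bm J,M}|^2$ and output $w(y)$'', which is in fact the cleaner way to state the pushforward argument.
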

\begin{proof}
Since $\braket{x | \bm J, M}$ can be efficiently computed because $\ket{ \bm J, M}$ is CT, so can be $\braket{x | W | \bm J, M} = \braket{w^{-1}(x) | \bm J, M} $. The distribution:
\begin{align*} p(x) &=  |\braket{x | W| \bm J, M} |^2, \end{align*}
can be efficiently sampled by applying the inverse of $w(x)$ to the samples drawn from $|\braket{x| \bm J, M}|^2$:
\begin{align*} p(w^{-1}(x)) &= |\braket{w^{-1}(x) | \bm J, M} |^2 = |\braket{x | W| \bm J, M} |^2.\end{align*}
Since $W$ is made of polynomially-many Toffoli gates, the inverse is obtained by applying the circuit in reverse to the bitstring $x$. \end{proof}
We also state Lemma 3 of \cite{VanDenNest09}, which is an application of the Chernoff-Hoeffding bound.
\begin{lemma}[CT state overlap $(\epsilon, \delta)$-approximation \cite{VanDenNest09}]
An overlap $\braket{\phi | \psi}$ between two CT states can be approximated by $\tilde{a}$, such that:
\begin{align*}
| \tilde{a} - \braket{\phi | \psi}| \leq \epsilon,
\end{align*}
with probability $1-\delta$ 
in $\poly(\frac{1}{\epsilon}, n, \log \frac{1}{\delta})$ time. We say that the overlap $\braket{\phi | \psi}$ is $(\epsilon, \delta)$-approximated by $\tilde{a}$.
\label{Lemma:CTOverlaps}
\end{lemma}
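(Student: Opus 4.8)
The plan is to reduce the claim to the estimation of two real expectation values and then apply the Chernoff--Hoeffding bound exactly as in Lemma~3 of \cite{VanDenNest09}. First I would note that for normalized CT states $\ket{\phi}$ and $\ket{\psi}$ the inner product splits into real and imaginary parts, $\braket{\phi|\psi} = \mathrm{Re}\braket{\phi|\psi} + i\,\mathrm{Im}\braket{\phi|\psi}$, and each part can be written as an expectation over the CT sampling distribution of one of the states. Concretely, inserting a resolution of the identity in the computational basis, $\braket{\phi|\psi} = \sum_x \braket{\phi|x}\braket{x|\psi} = \sum_x \overline{\braket{x|\phi}}\,\braket{x|\psi}$, and rewriting this as $\sum_x |\braket{x|\phi}|^2 \cdot \bigl(\braket{x|\psi}/\braket{x|\phi}\bigr)$ on the support of $\phi$, one sees that $\braket{\phi|\psi} = \mathbb{E}_{x\sim p_\phi}\bigl[ f(x) \bigr]$ where $p_\phi(x) = |\braket{x|\phi}|^2$ and $f(x) = \braket{x|\psi}/\braket{x|\phi}$. (Symmetrizing over the two states, or using the standard trick of sampling from $\tfrac12(p_\phi+p_\psi)$, avoids any issue with small denominators.)

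Next I would observe that each ingredient of the estimator is efficiently computable: by the CT property we can draw samples $x \sim p_\phi$ in $\poly(n)$ time, and for each sampled $x$ we can compute $\braket{x|\phi}$ and $\braket{x|\psi}$ to exponential precision in $\poly(n)$ time, hence evaluate the (bounded) random variable whose expectation is $\mathrm{Re}\braket{\phi|\psi}$ (respectively $\mathrm{Im}\braket{\phi|\psi}$). Since $|\braket{\phi|\psi}| \leq 1$, these random variables are bounded, so taking the empirical mean over $N$ independent samples and applying the Chernoff--Hoeffding inequality gives, for $N = O\bigl(\epsilon^{-2}\log(1/\delta)\bigr)$, an estimate within additive error $\epsilon/2$ of each part with failure probability at most $\delta/2$. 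A union bound over the real and imaginary estimates, combined with the exponentially small rounding errors from the overlap computations (which can be absorbed by taking, say, $\poly(n,\log(1/\epsilon))$ bits of precision), yields $|\tilde a - \braket{\phi|\psi}| \leq \epsilon$ with probability $1-\delta$ in total time $\poly(1/\epsilon, n, \log(1/\delta))$, as claimed.

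The step I expect to require the most care is controlling the estimator's variance, equivalently the boundedness of the random variable $f(x)$: the naive ratio $\braket{x|\psi}/\braket{x|\phi}$ can be huge when $\braket{x|\phi}$ is tiny but $\braket{x|\psi}$ is not. This is precisely why one samples from the balanced mixture $q(x) = \tfrac12 |\braket{x|\phi}|^2 + \tfrac12|\braket{x|\psi}|^2$ — which is itself CT since one can flip a fair coin and invoke the sampler of either state — and writes $\braket{\phi|\psi} = \mathbb{E}_{x\sim q}\bigl[ \overline{\braket{x|\phi}}\braket{x|\psi}/q(x) \bigr]$, whose integrand is bounded in magnitude by $1$ by Cauchy--Schwarz applied pointwise. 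With that reformulation the Chernoff--Hoeffding estimate goes through with the stated sample complexity, and the remainder of the argument is the routine bookkeeping of precision and union bounds sketched above.
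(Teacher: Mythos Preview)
The paper does not actually prove this lemma: it merely quotes it as Lemma~3 of \cite{VanDenNest09} and remarks that it is ``an application of the Chernoff--Hoeffding bound.'' Your write-up supplies precisely that application, and it is correct; in particular, the crucial technical point you flag --- that one must sample from the balanced mixture $q(x)=\tfrac12|\braket{x|\phi}|^2+\tfrac12|\braket{x|\psi}|^2$ so that the estimator $\overline{\braket{x|\phi}}\braket{x|\psi}/q(x)$ is bounded in modulus by $1$ via AM--GM --- is exactly the device used in \cite{VanDenNest09}. So your approach matches the intended proof, just spelled out in more detail than the paper chose to give.
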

 
 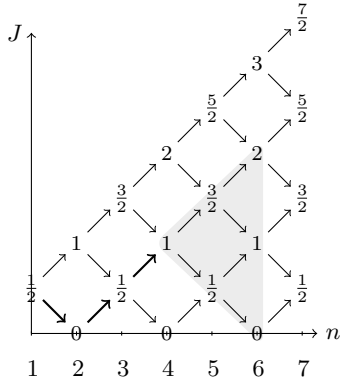
\begin{figure}[t]

\begin{tikzpicture}[scale=.8]
\fill[rounded corners,fill=gray!15] (2.8,1.5)--(4.6,-.2)--(4.6,3.2)--cycle; 
\draw [->](.75,0)--(.75,5) node[left]{$J$};
\draw [->](.75,0)--(5.5,0) node[right]{$n$};
\foreach \x/\xtext in {0.5/1, 1/2, 1.5/3, 2/4, 2.5/5, 3/6, 3.5/7}
{\draw (1.5*\x cm,1pt ) -- (1.5*\x cm,-1pt ) node[anchor=north] at(.75 + 1.5*\x cm, -0.3) {$\xtext$};}

\node[treenode] at(0.75,0.75) (1) {$\frac{1}{2}$};
\node[treenode] at(1.5,1.5) (2) {$1$};
\node[treenode] at(2.25,2.25) (3) {$\frac{3}{2}$};
\node[treenode] at(3,3) (4) {$2$};
\node[treenode] at(3.75,3.75) (5) {$\frac{5}{2}$};
\node[treenode] at(4.5,4.5) (6) {$3$};
\node[treenode] at(5.25,5.25) (16) {$\frac{7}{2}$};

\node[treenode] at(1.5,0) (7) {$0$};
\node[treenode] at(2.25,0.75) (8) {$\frac{1}{2}$};
\node[treenode] at(3,1.5) (9) {$1$};
\node[treenode] at(3.75,2.25) (10) {$\frac{3}{2}$};
\node[treenode] at(4.5,3) (11) {$2$};
\node[treenode] at(5.25, 3.75) (17) {$\frac{5}{2}$};

\node[treenode] at(3,0) (12) {$0$};
\node[treenode] at(3.75,.75) (13) {$\frac{1}{2}$};
\node[treenode] at(4.5,1.5) (14) {$1$};
\node[treenode] at (5.25, 2.25) (18) {$\frac{3}{2}$};

\node[treenode] at(4.5,0) (15) {$0$};
\node[treenode] at (5.25, 0.75) (19) {$\frac{1}{2}$};

\node[] at (0,0) (0) {};

\draw[->] (1) edge (2) (2) edge (3) (3) edge (4) (4) edge (5)(5) edge (6)(6) edge (16);
\draw[->,thick] (7) edge (8) (8) edge (9); 
\draw[->] (9) edge (10) (10) edge (11) (11) edge (17);
\draw[->] (12) edge (13) (13) edge (14) (14) edge (18);
\draw[->] (15) edge (19);
\draw[->,thick] (1) edge (7);
\draw[->] (2) edge (8)(8) edge (12); 
\draw[->] (3) edge (9)(9) edge (13) (13) edge (15);
\draw[->] (4) edge (10)(10) edge (14) (14) edge (19);
\draw[->] (5) edge (11) (11) edge (18);
\draw[->] (6) edge (17);
\end{tikzpicture}

\caption{The summation $\sum_{\bm J \supseteq \bm j}$ runs over all paths $\bm J \in \mathcal{A}_n$ that contain $\bm j \in \mathcal{A}_k$.  As an example, in the diagram above $\bm j = \left[ \frac{1}{2} \rightarrow 0 \rightarrow \frac{1}{2} \rightarrow 1 \right]$ and $k = 4, n = 6$. The summation runs over the paths within the shaded region. It follows that (in terms of the Yamanouchi symbols) $\bm J \supseteq \bm j = \lbrace{01111,01110,01101,01100 \rbrace}$.}
\label{Fig:SumOverFutures}
\end{figure}
 
 We now show how to approximate a set of output probability marginals, an enabling result for extension of the techniques used by Schwarz and van den Nest in \cite{Schwarz13}. Given a path $\bm j \in \mathcal{A}_k$ for $k \leq n$, define the \textit{output marginal} $p (\bm j)$:
\begin{align*}
p \left(\bm j \right) &:= \sum_{\bm J \supseteq \bm j} \sum_M p \left( \bm J, M\right) \\ 
&= \bra{\phi}  \sum_{\bm J \supseteq \bm j;M}   \Ket{\bm J, M} \Braket{\bm J, M| \phi} := \bra{ \phi} \Pi \left(\bm j\right) \ket{\phi},
\end{align*} 
where the summation $\sum_{\bm J \supseteq \bm j}$ sums all paths $\bm J \in \mathcal{A}_n$ that contain  $\bm j \in \mathcal{A}_k$ (see Fig.~\ref{Fig:SumOverFutures}). The summation $\sum_M$ runs from $-J$ to $J$ in integer steps. We use $\sum_ {\bm J \supseteq \bm j; M}$ as a shorthand for $ \sum_{\bm J \supseteq \bm j} \sum_M$. The projector:
\begin{align*}
\Pi \left(\bm j \right) &:= \sum_{\bm J \supseteq \bm j; M} \Ket{\bm J, M} \Bra{\bm J, M},
\label{Eq:MarginalProjector}
\end{align*}
can be simplified to (Appendix~\ref{App:Projector}):
\begin{align*}
\Pi \left({\bm j}\right) &=
\sum_{m} \Ket{ \bm j, m} \Bra{\bm j, m}.
\end{align*}
where the sum $\sum_m$ runs over $m \in \lbrace -j, -j+1,\ldots j \rbrace$.
\begin{lemma}
For $\bm j \in \mathcal{A}_k$, 
$p(\bm j)$ can be classically $(\epsilon, \delta)$-approximated by
 $\tilde{p}(\bm j)$ in $\poly \left(\frac{1}{\epsilon}, n, \log \frac{1}{\delta} \right)$ time.
\label{Lemma:marginalappx}
\end{lemma}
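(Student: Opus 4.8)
The plan is to recast $p(\bm j)=\bra{\phi}\Pi(\bm j)\ket{\phi}$ as the mean of a Monte Carlo estimator built, in the spirit of the overlap estimator behind Lemma~\ref{Lemma:CTOverlaps}, by nesting two layers of sampling, and then to show that despite the nesting the estimator has only polynomial variance. Split a computational basis string as $x=(x_A,x_B)$ into its first $k$ and last $n-k$ qubits, write $\phi_x:=\braket{x|\phi}$ and $q(b):=\sum_y|\phi_{y\,b}|^2$ for the marginal of the output distribution of $\ket{\phi}$ on the last $n-k$ qubits, and set $\ket{(\bm j,m)\,x_B}:=\ket{\bm j,m}\otimes\ket{x_B}$. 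Since $\Pi(\bm j)=\sum_m\ket{\bm j,m}\bra{\bm j,m}$ acts on the first $k$ qubits only, one has the pointwise identity $\bra{x}\Pi(\bm j)\ket{\phi}=\sum_m\braket{x_A|\bm j,m}\,\braket{(\bm j,m)\,x_B|\phi}$. Each $\ket{\bm j,m}$ is a sequentially coupled basis state on $k$ qubits and hence CT, so for fixed $x_B$ the inner overlap $\braket{(\bm j,m)\,x_B|\phi}$ is the expectation over $y$ drawn from $|\braket{y|\bm j,m}|^2$ of the unbiased variable $\phi_{y\,x_B}/\braket{y|\bm j,m}$, whose second moment is $\sum_y|\phi_{y\,x_B}|^2=q(x_B)$.

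Chaining the two layers produces the estimator: draw $x$ from the output distribution of $\ket{\phi}$ (possible, since $\ket{\phi}$ is CT by Lemma~\ref{Lemma:CTExtension}), draw $y_m$ from $|\braket{y|\bm j,m}|^2$ independently for each of the $2j+1=O(n)$ values of $m$, and output
\begin{align*}
\hat F:=\frac{1}{\phi_x}\sum_m\braket{x_A|\bm j,m}\,\frac{\phi_{y_m\,x_B}}{\braket{y_m|\bm j,m}}.
\end{align*}
All relevant amplitudes of $\ket{\phi}$ and of the states $\ket{\bm j,m}$ are computable to exponential precision, so $\hat F$ is evaluable in $\poly(n)$ time per sample; telescoping the two expectations gives $\mathbb E[\hat F]=\sum_x\overline{\phi_x}\,\bra{x}\Pi(\bm j)\ket{\phi}=\bra{\phi}\Pi(\bm j)\ket{\phi}=p(\bm j)$. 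The decisive step is bounding $\mathbb E|\hat F|^2$: a crude Cauchy--Schwarz estimate here pays a useless factor $\sim 2^{k}$, so instead I would expand $\bigl|\sum_m\braket{x_A|\bm j,m}\,\tilde g_m\bigr|^2$ with $\tilde g_m:=\phi_{y_m\,x_B}/\braket{y_m|\bm j,m}$, take the conditional expectation over the independent $y_m$, and observe that the off-diagonal terms recombine into $|\bra{x}\Pi(\bm j)\ket{\phi}|^2$ while the diagonal ones contribute $q(x_B)\sum_m|\braket{x_A|\bm j,m}|^2=q(x_B)\,\braket{x_A|\Pi(\bm j)|x_A}$. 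Dropping a non-positive remainder and averaging over $x\sim|\phi_x|^2$ (the weight $|\phi_x|^2$ cancelling the $1/|\phi_x|^2$),
\begin{align*}
\mathbb E\,|\hat F|^2 &\le \sum_x\bigl|\bra{x}\Pi(\bm j)\ket{\phi}\bigr|^2 + \sum_{x_A,x_B}q(x_B)\,\braket{x_A|\Pi(\bm j)|x_A} \\
&= \bra{\phi}\Pi(\bm j)\ket{\phi}+\Tr\Pi(\bm j) = p(\bm j)+(2j+1) = O(n),
\end{align*}
using $\Pi(\bm j)^2=\Pi(\bm j)$, $\sum_{x_B}q(x_B)=1$, and $j\le n/2$.

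With $\mathbb E[\hat F]=p(\bm j)$ and $\mathrm{Var}(\hat F)=O(n)$, a median-of-means estimate assembled from $O(n\,\epsilon^{-2}\log\delta^{-1})$ independent copies of $\mathrm{Re}\,\hat F$ will be an $(\epsilon,\delta)$-approximation of $p(\bm j)$ in $\poly(1/\epsilon,n,\log1/\delta)$ time (the imaginary part of $\hat F$ has zero mean and is discarded; note $\hat F$ is unbounded, so one uses median-of-means rather than a Hoeffding bound). The remaining loose end is finite precision: because $\hat F$ divides by $\phi_x$, which can be exponentially small, I would carry $\poly(n)$-bit approximations throughout --- legitimate since every amplitude is computable to exponential precision --- and a routine error-propagation argument then shows the accumulated rounding shifts the final estimate by at most $\epsilon/2$. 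The one genuine obstacle in this plan is the variance bound: one must look past the naive exponential estimate and see that the ``diagonal'' contribution telescopes to $\Tr\Pi(\bm j)=2j+1$ and the ``off-diagonal'' contribution telescopes to $\bra{\phi}\Pi(\bm j)\ket{\phi}=p(\bm j)\le1$, so that the doubly-nested estimator is in fact only polynomially noisy.
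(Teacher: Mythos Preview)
Your argument is correct and takes a genuinely different route from the paper. The paper rewrites each summand $\braket{\phi|\bm j,m}\braket{\bm j,m|\phi}$ as a single overlap $(\bra{\bm j,m}\bra{\phi})\,U_{\textsf{SWAPS}}\,(\ket{\phi}\ket{\bm j,m})$ on an enlarged $(n{+}k)$-qubit register, where $U_{\textsf{SWAPS}}$ is an explicit qubit permutation; since products of CT states are CT and permutations preserve CT-ness, Lemma~\ref{Lemma:CTOverlaps} then applies as a black box to each of the $2j{+}1\le n{+}1$ terms, and no variance analysis appears anywhere. You instead open that black box: you build the nested ratio estimator $\hat F$ directly on the original $n$ qubits and prove $\mathbb E|\hat F|^2\le p(\bm j)+\Tr\Pi(\bm j)=O(n)$ by hand, using the independence of the $y_m$ across $m$ so that the cross terms of $|\hat F|^2$ reconstitute $|\bra{x}\Pi(\bm j)\ket\phi|^2$ while the diagonal terms collapse to $q(x_B)\,\braket{x_A|\Pi(\bm j)|x_A}$. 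The paper's route is shorter and more modular because it delegates all analytic work to Lemma~\ref{Lemma:CTOverlaps}, at the price of the SWAP trick and doubling the register; your route is self-contained, avoids the auxiliary system entirely, and makes the variance constant $2j{+}2$ explicit. Both land on the same $\poly(1/\epsilon,n,\log 1/\delta)$ runtime, and the finite-precision caveat you flag is the same one that sits beneath Lemma~\ref{Lemma:CTOverlaps} in the paper's approach.
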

\begin{proof}
 We first show that the marginal $p(\bm j)$ on $k$ qubits can be written as a transition amplitude of a larger, $(n+k)$-qubit circuit as: 
\begin{align*}
\braket{\phi|\bm j, m}  \braket{\bm j, m | \phi} 
&= \left( \bra{\bm j, m} \bra{\phi} \right) U_{\textsf{SWAPS}} \left( \ket{\phi} \ket{\bm j,m} \right), 
\end{align*}
where $U_{\textsf{SWAPS}}$ is a permutation gate on $k+n$ qubits. Write symbolically $\ket{\bm j, m} = \ket{\psi} = \ket{\psi_1\psi_2\ldots\psi_k}$ and $\ket{\phi} = \ket{\phi_1\phi_2\ldots\phi_n}$, so that:
\begin{align*}
\braket{\phi|\bm j, m} &\braket{\bm j, m|\phi} = \braket{\phi|\psi} \braket{\psi|\phi}\\&= \braket{\phi_1 \ldots \phi_n |\psi_1 \ldots \psi_k }\braket{\psi_1 \ldots \psi_k |\phi_1 \ldots \phi_n }.
\end{align*}
Let $U_\textsf{SWAPS}$ swap the $(n+i)$-th and $(n-k+i)$-th qubits for all $1 \leq i  \leq k$:
\begin{align*}
U_\textsf{SWAPS} &\ket{\psi_1 \ldots \psi_k }\ket{\phi_1 \ldots \phi_n}\\&= \ket{\phi_1 \ldots \phi_k }\ket{\psi_1 \ldots \psi_k, \phi_{k+1} \ldots \phi_n}.
\end{align*}
This gives:
\begin{align*}
& \bra{\phi} \bra{\psi} U_\textsf{SWAPS} \ket{\psi}\ket{\phi} \\
&= \bra{\phi_1 \ldots \phi_n} \braket{\psi_1 \ldots \psi_k | \phi_1 \ldots \phi_k }\ket{\psi_1 \ldots \psi_k, \phi_{k+1} \ldots \phi_n} \\
&= \braket{\phi_1 \ldots \phi_n | \psi_1 \ldots \psi_k}  \braket{\psi_1 \ldots \psi_k | \phi_1 \ldots \phi_n} \\
&= \braket{\phi|\psi} \braket{\psi|\phi},
\end{align*}
as desired. Since both $\ket{\bm j, m}$ and $\ket{\phi}$ states are CT and $U_\textsf{SWAP}$ is a permutation on up to $2n$ objects,  \begin{align*}
\left( \bra{\bm j, m} \bra{\phi} \right) U_\textsf{SWAPS} \left( \ket{\phi} \ket{\bm j,m} \right),\end{align*}
 can be $(\epsilon, \delta)$-approximated by Lemma~\ref{Lemma:CTOverlaps}. Therefore: 
  \begin{align*}
    p(\bm j)  &=  \bra{\phi} \Pi(\bm j) \ket{\phi} \\
    &= \sum_m \braket{\phi|\bm j, m}  \braket{\bm j, m | \phi} 
     \\ &= \sum_{m} \left( \bra{\bm j, m} \bra{\phi} \right) U_{\textsf{SWAPS}} \left( \ket{\phi} \ket{\bm j,m} \right).
\end{align*}
 
Since $\sum_m$ sums $2j+1 \leq n+1$ terms, it follows that $p(\bm j)$ can be also $(\epsilon, \delta)$-approximated. 
\end{proof}

We now combine Lemmas~\ref{Lemma:CTExtension},~\ref{Lemma:CTOverlaps} and~\ref{Lemma:marginalappx} to describe a classical algorithm that finds large elements in the output distribution of quantum Schur circuits. It is an adaptation of the Kushilevitz-Mansour algorithm \cite{kushilevitz1993learning}.

\begin{theorem}
\label{Theorem:Kushilevitz-Mansour}
Let $p(\bm J): \mathcal{A}_n \rightarrow [0,1]$ be a probability distribution on paths. There is a classical algorithm that outputs a set $L \subseteq \mathcal{A}_n$ in $\poly \left(n,\frac{1}{\theta}, \log \frac{1}{\gamma} \right)$ time, such that for some $\theta > 0$:
\begin{equation}
\begin{aligned}
\label{Eq:LConditions}
\forall \, \bm J \in L: \; p(\bm J) &\geq  \frac{\theta}{2}, \\
\forall \, \bm J \in \mathcal{A}_n: \; p(\bm J) &> \theta \implies \bm J \in L,
\end{aligned}
\end{equation}
 with probability at least $1-\gamma$. 
\end{theorem}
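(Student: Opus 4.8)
The plan is to run a breadth-first, pruned search over the branching diagram, maintaining at each level $k$ a list $L_k \subseteq \mathcal{A}_k$ of ``heavy'' prefixes and using Lemma~\ref{Lemma:marginalappx} to test heaviness approximately. Everything rests on one monotonicity fact: since $p(\bm j) = \sum_{\bm J \supseteq \bm j} p(\bm J)$ and every $\bm J \in \mathcal{A}_n$ has exactly one length-$k$ prefix, any prefix $\bm j$ of a path $\bm J$ satisfies $p(\bm j) \geq p(\bm J)$. Hence if $p(\bm J) > \theta$, then \emph{every} prefix of $\bm J$ has marginal exceeding $\theta$, so a heavy leaf can only be missed if one of its ancestors was discarded earlier.

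Concretely, I would fix an accuracy $\epsilon = \theta/8$ and a per-call failure probability $\delta$ to be pinned down later. Set $L_1 = \mathcal{A}_1$ (a single prefix). Given $L_k$, form the candidate set of all valid children $\bm j c \in \mathcal{A}_{k+1}$ of prefixes $\bm j \in L_k$; for each candidate, compute an $(\epsilon,\delta)$-approximation $\tilde p(\bm j c)$ of its marginal via Lemma~\ref{Lemma:marginalappx}, and place $\bm j c$ into $L_{k+1}$ iff $\tilde p(\bm j c) \geq 3\theta/4$. Output $L = L_n$ with the same test applied at the final level. On the event that all approximations are within $\epsilon$ of the true marginals, the two required guarantees follow from the choice $\epsilon = \theta/8$: if $p(\bm J) > \theta$ then at each level its relevant prefix $\bm j$ has $\tilde p(\bm j) \geq \theta - \epsilon = 7\theta/8 \geq 3\theta/4$ and is retained, so $\bm J \in L$; conversely any retained $\bm J$ has $\tilde p(\bm J) \geq 3\theta/4$, hence $p(\bm J) \geq 3\theta/4 - \epsilon = 5\theta/8 \geq \theta/2$.

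The quantitative crux is bounding the list sizes so the search stays polynomial. Because $\{\bm J \supseteq \bm j : \bm j \in \mathcal{A}_k\}$ partitions $\mathcal{A}_n$, we have $\sum_{\bm j \in \mathcal{A}_k} p(\bm j) = \sum_{\bm J \in \mathcal{A}_n} p(\bm J) = 1$ (equivalently, the projectors $\Pi(\bm j)$ of Lemma~\ref{Lemma:marginalappx} are orthogonal and sum to the identity); and, as just shown, every retained prefix has true marginal $> \theta/2$. Hence $|L_k| < 2/\theta$ for all $k$, the candidate set at each level has size at most $2|L_k| < 4/\theta$, and the algorithm makes $O(n/\theta)$ calls to the marginal approximator overall. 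Taking $\delta = \Theta(\gamma\theta/n)$ and a union bound over all calls yields success probability $\geq 1-\gamma$, while by Lemma~\ref{Lemma:marginalappx} each call costs $\poly(1/\epsilon, n, \log(1/\delta)) = \poly(n, 1/\theta, \log(1/\gamma))$; multiplying by the $O(n/\theta)$ calls keeps the total runtime $\poly(n, 1/\theta, \log(1/\gamma))$.

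The main obstacle is ensuring that pruning based on noisy estimates never discards a genuinely heavy leaf: this is precisely why the two thresholds — $\theta$ for ``definitely keep'' and $\theta/2$ for ``possibly keep'' — must be separated by enough room to absorb an additive error $\epsilon$ at every level at once, and why the union bound must be taken over all $O(n/\theta)$ estimates rather than a single one. Everything else (the tree traversal and the list-size bound from normalization of the marginals) is routine once Lemma~\ref{Lemma:marginalappx} is available.
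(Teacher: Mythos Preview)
Your proposal is correct and follows essentially the same approach as the paper: a level-by-level pruned search over the branching diagram using Lemma~\ref{Lemma:marginalappx} to approximate the marginals, with retention threshold $3\theta/4$ and the normalization $\sum_{\bm j \in \mathcal{A}_k} p(\bm j)=1$ to bound $|L_k|$ by $2/\theta$. The only cosmetic difference is that the paper uses $\epsilon=\theta/4$ and adds an explicit halt (``if $|L_k|>2/\theta$ output $\emptyset$'') so that the runtime is polynomial unconditionally rather than only on the success event; your version is otherwise identical in structure and in the constants that matter.
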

\begin{proof} 
 See Algorithm~\ref{Algorith:KM}.
 
\RestyleAlgo{boxruled}
\begin{algorithm}[h]
\caption{Kushilevitz-Mansour Algorithm}
\label{Algorith:KM}
\begin{enumerate}
\item Set $L_2 = \emptyset$. Choose:
\begin{align*}
\delta < \frac{\theta}{2n},\end{align*} and compute $\tilde{p}(\bm j_2)$ for both paths in $\mathcal{A}_2$ by Lemma~\ref{Lemma:marginalappx}
, such that:
\begin{align*} |p(\bm j_2) - \tilde{p}(\bm j_2)| \leq \frac{\theta}{4}.\end{align*} 
Add $\bm j$ to $L_2$ if $\tilde{p}(\bm j_2) \geq \frac{3}{4}\theta$.

\item Continue for $k = 3, \ldots, n$. Assume $L_{k-1}$ 
has been found. 
For any path $\bm j_{k-1} \in L_{k-1}$, take all possible steps in the branching diagram. This gives paths $\bm j_k \in \mathcal{A}_k$ 
that end at $j_k = |j_{k-1} \pm \frac{1}{2}|$. 

For every such path, compute the approximation $\tilde{p}(\bm j_k)$ such that:
\begin{align*}
\left| \tilde{p}(\bm j_k) -p(\bm j_k) \right| \leq \frac{\theta}{4}. 
\end{align*}
 If $\tilde{p} (\bm j_k) \geq \frac{3}{4}\theta$, add the path $\bm j_k$ to $L_k$. 

\item In every step of the computation, check if $|L_k|>\frac{2}{\theta}$. If true, halt and output $\emptyset$. 

\textit{The algorithm never halts if all approximation steps succeed.}

\item Output $L = L_n$. 
\end{enumerate}
\end{algorithm}

The algorithm runs in $n$ steps, each of which succeeds with probability at least $(1-\delta)^{|L_k|}$. Since $|L_k| \leq \frac{2}{\theta}$,  the success probability is at least: 
\begin{align*} (1-\delta)^{2n/\theta} \geq 1- \frac{2 \delta n}{\theta} := 1 - \gamma.
\end{align*} Thanks to $\delta < \frac{\theta}{2n}$, it follows that $1-\gamma>0$. The algorithms terminates in $\poly \left(n, \frac{1}{\theta}, \log \frac{1}{\gamma} \right)$ time as it halts whenever the number of elements in any list exceeds $\frac{2}{\theta}$. Since $p(\bm j_k) \geq \frac{\theta}{2}$ for each $\bm j_k \in L_k$,  the final list $L$ contains at most $\frac{2}{\theta}$ elements by normalization. So if all approximation steps succeed, the algorithm does not halt before it outputs $L$. 
\end{proof}

Algorithm~\ref{Algorith:KM} has an interesting consequence: since it runs in polynomial time whenever $\theta = 1/\poly(n)$, paths with polynomially small $p(\bm J) = \sum_M  p(\bm J, M)$ can be found in polynomial time.  When such path $\bm J$ is found, it is possible to approximate $p(\bm J, M)$ for all $M$, since there are  $2J +1 \leq n+1$ distinct values of $M$ it has to be approximated for by Lemma \ref{Lemma:CTOverlaps}. Such approximation of transition amplitudes has the same precision as if when polynomially many samples were taken with a quantum computer. The $\textsf{SU}(2)$ Quantum Schur sampling circuits therefore cannot encode classically hard-to-approximate quantities in amplitudes that could be resolved by sampling, because any such quantity could be found by the presented algorithm and then approximated by Lemma~\ref{Lemma:CTOverlaps}.

\section{Approximate Sampling}
\label{Sec:ApproxSampl}
Following Schwarz and van den Nest \cite{Schwarz13}, we use the above algorithm to approximately sample the quantum Schur circuits under additional sparsity constraint on their output distribution: 

\begin{definition}[$\epsilon$-approximate $t$-sparsity]
A probability distribution $p(\bm J, M)$ is $t$-sparse if it has at most $t$ non-zero elements $p(\bm J, M)$. A probability distribution $\tilde{p}(\bm J, M)$ is $\epsilon$-approximately $t$-sparse if there exists a $t$-sparse distribution $p(\bm J, M)$ such that:
\begin{align*}
\| p - \tilde{p} \|_1 \leq \epsilon.
\end{align*}
\end{definition}
We also adapt a technical lemma from \cite{kushilevitz1993learning}.
\begin{lemma}
\label{Lemma:SupportLemma}
Let $p(\bm J, M)$ be an $\epsilon$-approximate $t$-sparse distribution. Let $S$ be the set of all $(\bm J, M)$ for which $p(\bm J, M)$ is greater than $\frac{\epsilon}{t}$. Then:
\begin{align*}
\sum_{\bm J \not\in S; M}  p(\bm J, M) \leq 2 \epsilon.
\end{align*}
\end{lemma}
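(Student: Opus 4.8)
The plan is to split the total $\ell_1$ weight that $p$ places outside the ``heavy'' set $S$ into two contributions: the weight on elements that are zero under the witnessing $t$-sparse distribution, and the weight on the (at most $t$) elements where the witnessing distribution is supported but which still fall below the threshold $\epsilon/t$. Let $q(\bm J, M)$ be a $t$-sparse distribution achieving $\|p - q\|_1 \le \epsilon$, and let $T$ denote its support, so $|T| \le t$. For any $(\bm J, M) \notin T$ we have $q(\bm J, M) = 0$, hence $p(\bm J, M) = |p(\bm J, M) - q(\bm J, M)|$, and summing over all such elements gives $\sum_{(\bm J, M) \notin T} p(\bm J, M) \le \|p - q\|_1 \le \epsilon$. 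This already controls everything outside $T$.

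Next I would handle the part of $T$ that is not in $S$. By definition of $S$, every $(\bm J, M) \notin S$ has $p(\bm J, M) \le \epsilon/t$, so the elements of $T \setminus S$ contribute at most $|T \setminus S| \cdot \tfrac{\epsilon}{t} \le t \cdot \tfrac{\epsilon}{t} = \epsilon$ in total. Combining the two bounds, and noting that $\{(\bm J, M) \notin S\} \subseteq \{(\bm J, M) \notin T\} \cup (T \setminus S)$, we get
\begin{align*}
\sum_{\bm J \notin S; M} p(\bm J, M) &\le \sum_{(\bm J, M) \notin T} p(\bm J, M) + \sum_{(\bm J, M) \in T \setminus S} p(\bm J, M) \\
&\le \epsilon + \epsilon = 2\epsilon,
\end{align*}
which is the claimed inequality.

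There is no real obstacle here; the only thing to be careful about is the bookkeeping of which index set is a subset of which, so that the two estimates genuinely cover the complement of $S$ without double counting issues (double counting is harmless since all terms are nonnegative, but the containment $\{(\bm J,M)\notin S\}\subseteq\{(\bm J,M)\notin T\}\cup(T\setminus S)$ should be stated explicitly). One should also note that the argument is insensitive to how ties at the threshold $\epsilon/t$ are broken, since we only ever use the inequality $p(\bm J, M) \le \epsilon/t$ off $S$ and never need a strict bound. The factor $2$ is tight in the obvious sense — one $\epsilon$ comes from the approximation error off the sparse support, the other from the mass that the sparse distribution hides just below the cutoff.
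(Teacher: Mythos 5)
Your proof is correct and follows essentially the same route as the paper's: both split the mass outside $S$ into the part off the sparse support $T$ (bounded by $\|p-q\|_1\le\epsilon$) and the part in $T\setminus S$ (at most $t$ terms each $\le \epsilon/t$). The paper merely phrases this decomposition via indicator functions and a triangle inequality on $\ell_1$ norms, whereas you state the set containment directly; the content is the same.
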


\begin{proof}
Let $p^{t}(\bm J, M) $ be a $t$-sparse distribution that is  $\epsilon$-close to $p(\bm J, M)$. Define $T$ to be the set of all $(\bm J, M)$ for which $p^t$ is nonzero, i.e. the support of $p^{t}$.
Trivially, $S \cap T \subseteq S$, which implies that: 
\begin{align*}
\sum_{\bm J \not\in S; M}  p(\bm J, M) &\leq \sum_{\bm J \not\in S\cap T; M}  p(\bm J, M).
\end{align*}
Define the indicator $I_A: A \rightarrow \lbrace 0, 1 \rbrace$ on the set $A$ as follows:
\begin{align*}
I_A(a) &:= \begin{cases} 1, \text{ if } a \in A, \\ 
0, \text{ otherwise}, \end{cases} 
\end{align*}
so that: 
\begin{align*}
\sum_{\bm J \not\in S\cap T; M}  p(\bm J, M) &= \sum_{\bm J \in \mathcal{A}_n; M}p(\bm J,M)  \left( I_{S \cap T}(\bm J, M) - 1 \right) \\&= \| p I_{S \cap T} - p \|_1.
\end{align*}
By the triangle inequality: 
\begin{align*}
 \| p I_{S \cap T} - p \|_1
&\leq \| p - p I_{T} \|_1 + \| pI_{S \cap T} - p I_{T} \|_1.
\end{align*}
Since $p$ is $\epsilon$-approximate $t$-sparse, it follows that:
\begin{align*}
&\| p - pI_T \|_1 \leq \| p - p^{t} \|_1  \leq \epsilon. \label{Eq:Lemma4A}
\end{align*}
We also have that: 
\begin{equation}
\begin{aligned}
\| p I_{S \cap T} - p I_{T} \|_1 &= \sum_{\bm J \in T;M} p(\bm J,M)  I_{S \cap T}(\bm J,M) \\
&= \sum_{\bm J \in T/S; M} p(\bm J,M) \leq \frac{\epsilon}{t}t = \epsilon.
\label{Eq:Lemma4B}
\end{aligned}
\end{equation}
because all elements in $T/S$ are $\leq \frac{\epsilon}{t}$ and there is at most $t$ of them.
This gives: 
\begin{align*}
\sum_{\bm J \not\in S; M}  p(\bm J, M)  \leq 2\epsilon.
\end{align*}
\end{proof}

Theorem~\ref{Theorem:Kushilevitz-Mansour} and Lemma~\ref{Lemma:SupportLemma} can be combined to define a probability distribution close to the quantum output that can be sampled from in $\poly(t, \frac{1}{\epsilon}, n)$ time. Assume that $p(\bm J, M)$ is $\epsilon$-approximately $t$-sparse. Let $L \subseteq \mathcal{A}_n$ be the set of paths generated by the Kushilevitz-Mansour algorithm with threshold $\theta = \frac{\epsilon}{t}$. Choose: 
\begin{align} \epsilon' &= \min \left( \frac{\epsilon}{(n+1)|L|}, \, \frac{\epsilon}{4t} \right), 
\label{Eq:EpsilonDash}
\end{align} and compute $\epsilon'$ approximations $\tilde{p}(\bm J, M)$ for all $\bm J \in L$ and $M$ by Lemma~\ref{Lemma:CTOverlaps}. Define a normalization factor $\alpha$ as:  \begin{align*}
   \alpha &= \frac{1}{2^n- \sum_{\bm J\in L}(2J + 1)},
\end{align*}
such that $\sum_{\bm J \not\in L; M} \alpha = 1$  (see Appendix \ref{App:completeness}). Use the $\epsilon'$-approximations $\tilde{p}(\bm J, M)$ to define: \begin{align*}
\tilde{p} &= \begin{cases} \tilde{p} (\bm J,M) \, \text{ for } \bm J \in L, \\
 \tilde{p}_\circ  := \alpha(1- \sum_{\bm J \in L; M} \tilde{p} (\bm J, M)) \text{ otherwise, }\end{cases},
\end{align*}
so that $\tilde{p}$ becomes uniform on all $\bm J$ outside $L$. The constant $ \tilde{p}_\circ $ is chosen so that $\tilde{p}$ is normalized. 
Then:
 {\small
\begin{align*}
&\| \tilde{p} - p\|_1 = \\ & \sum_{\bm J \in L; M} | \tilde{p}(\bm J, M) - p (\bm J, M)| + \sum_{\bm J \not\in L; M} | \tilde{p}(\bm J, M) - p(\bm J, M)| \\ 
&\leq \epsilon + \sum_{\bm j \not\in L;M} | \tilde{p}(\bm J, M) - p(\bm J, M)|, 
\end{align*}}
since:
\begin{align*}
\sum_{\bm J \in L; M} | \tilde{p}(\bm J, M) - p (\bm J, M)| &\leq  \sum_{\bm J \in L; M} \epsilon' \\ &\leq (n+1)|L| \epsilon' \leq \epsilon.
\end{align*}
Define also:
\begin{align*}
 p_\circ & = \alpha \left(1 - \sum_{\bm J \in L; M} p(\bm J, M)\right),
\end{align*}
and notice that:
\begin{align*}
\sum_{\bm J \not\in L; M}|  p_\circ  -  \tilde{p}_\circ |&\leq \left|  \sum_{\bm J \in L; M} p(\bm J, M) - \sum_{\bm J \in L; M} \tilde{p}(\bm J, M) \right| \\&\leq 
\sum_{\bm J \in L; M} |p(\bm J, M) - \tilde{p}(\bm J, M) | \leq \epsilon.
\end{align*}
By the triangle inequality:
\begin{align*}
&\sum_{\bm J \not\in L;M} | \tilde{p}(\bm J, M) - p(\bm J, M)| = \sum_{\bm J \not\in L;M} |  \tilde{p}_\circ- p (\bm J,M)| \\
&\leq \sum_{\bm J \not\in L; M}|  p_\circ  -  \tilde{p}_\circ |   + \sum_{\bm J \not\in L; M} | p_\circ  - p (\bm J, M)| \\
&\leq \epsilon + \sum_{\bm J \not\in L; M} | p_\circ  - p (\bm J, M)|.
\end{align*}
We now use the set $S$ from from Lemma~\ref{Lemma:SupportLemma}. $S \subseteq L$ by the defining property of $L$. It follows that:
\begin{align*}
\sum_{\bm J \not\in L;M} p (\bm J, M) \leq \sum_{\bm J \not\in S;M} p (\bm J, M) \leq 2\epsilon.
\end{align*}
Notice that:
\begin{align*}
\sum_{\bm J \not\in L; M}  p_\circ  &= \sum_{\bm J \not\in L; M} \left( \alpha \sum_{\bm J' \not\in L; M'} p(\bm J', M') \right) \\ &= \sum_{\bm J' \not\in L; M'} p(\bm J', M'). \end{align*}
This gives:
\begin{align*}
\sum_{\bm J \not\in L; M} | p_\circ  - p (\bm J, M)| &\leq \sum_{\bm J \not\in L; M}  p_\circ  + \sum_{\bm J \not\in L; M} p (\bm J, M) \\ & = 2 \sum_{\bm J \not \in L ; M} p(\bm J,M) \leq 4 \epsilon, 
\end{align*}
which leads to:
\begin{align*}
\sum_{\bm J \not\in L; M}  | \tilde{p}(\bm J, M) - p (\bm J, M)|\leq 5\epsilon,
\end{align*}
and:
\begin{align*} \| \tilde{p} - p\|_1 \leq 6\epsilon.\end{align*}
We now show how to classically sample $\tilde{p}$.
\begin{theorem}
\label{Theorem:WeakSampling} Assume that $p(\bm J, M)$ is $\epsilon$-approximate $t$-sparse. It can be sampled classically in $\poly(n, \frac{1}{\epsilon}, t)$ time to $6\epsilon$ error in the total variational distance.
\end{theorem}
\begin{proof}
Use the Kushilevitz-Mansour algorithm in Theorem~\ref{Theorem:Kushilevitz-Mansour} with threshold $\theta = \frac{\epsilon}{t}$ to find $L$. Compute $b = \sum_{\bm J \in L; M} \tilde{p}(\bm J, M)$. Flip a coin with a bias $b$.
\begin{itemize}
\item With probability $b$, output a sample drawn from $\tilde{p}(\bm J, M)/b$ for $\bm J \in L$ and corresponding $M$.
\item With probability $1-b$, output $(\bm J, M)$ for $\bm J \not\in L$ uniformly randomly.
\end{itemize}
To sample $(\bm J,M)$ uniformly randomly,  generate a random bitstring with $n -1$ bits and check if it encodes a Yamanouchi symbol. This can be verified by checking that any prefix of $m \leq n-1$ bits has at most $\lceil \frac{m}{2} \rceil$ zeroes. Once found, generate a random integer $M'$ from $[n+1]$. Check if $M' \leq 2J+1$. If yes, define $M = (M' - J-1)$ and output $(\bm J, M)$. Otherwise repeat. A valid Yamanouchi symbol will be found in $\poly(n)$ trials by a dimensionality argument (Appendix~\ref{App:completeness}). This procedure samples the probability distribution $\tilde{p}$ defined above, which has been shown $6\epsilon$ close in the total variational distance to $p$.
\end{proof}

While the above algorithm  runs in $\tilde{p}$ in $\poly(\frac{1}{\epsilon}, n ,t)$ time, it discards significant amount of paths during the uniform sampling of paths which may be an unnecessary bottleneck for the eventual implementation. We explain how to avoid this problem by an alternative algorithm for sampling the paths, based on the Greene-Nijenhuis-Wilf algorithm \cite{Greene79} in Appendix~\ref{App:PathsToYD}. 

\section{How sparse is the output?}

\begin{figure}[t]
\includegraphics[scale=0.5]{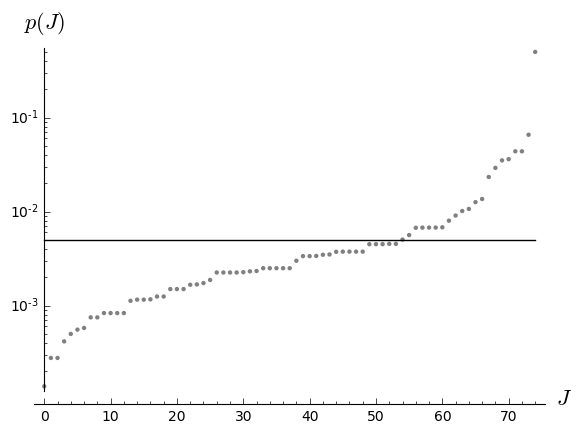}
\caption{Output distribution of PQC-SEQ that does not satisfy the sufficient sparsity condition for $n=10$.  The horizontal line labels the $(2 n^2)^{-1}$ threshold. The distribution is actually $1/10$-approximate $21$-sparse and `fools' the proxy criteria by having a single overwhelmingly large element. The $p$-axis is logarithmic.}
\label{Fig:FalsePositive}
\end{figure}

We consider the range of applicability of the outlined algorithm. Since the set of classical gates $W$ is large, we limit this analysis to Permutational Quantum Computing in the sequentially coupled basis and study the output distributions for $n \le 10$ qubits. We randomly chose $5$ paths and consider $10$ random permutations for each. This gives $50$ sets of output distributions with dimension $d$ determined by $J$ of each path. Recall that $d$ can be exponentially large in $n$. 

 All chosen distributions contained an element greater than $\frac{1}{2n}$. As a sufficient condition for $\frac{1}{n}$-approximate $2n^2$-sparsity  by Lemma~\ref{Lemma:SupportLemma}, we checked if the sum of all elements less than $\frac{1}{2n^2}$ is less than $\frac{1}{2n}$. Distributions for permutations on $4$ to $9$ qubits all have this property, while the fraction that do not have it for $n = 10$ qubits was estimated to be less than $0.1 \%$. Being a sufficient condition, some of these distributions are nevertheless very far from flat - an example is shown in Fig.~\ref{Fig:FalsePositive}.  
 
 We also consider a stricter sufficient condition: for all $J$-blocks with dimension $d > n$, we computed the fraction of  output distributions for which the sum of all elements except for the largest $C \left(  \log_2 d \right)^D$ ones is less than $1/ \log_2 d $ for some constants $C$ and $D$. Since $d < 2^n$, this condition suffices for $2n$-approximate $(C n^D)$-sparsity of the output. Almost all of the distributions, with the exception of about $0.4 \%$ of those for $n=9$, were $2 \log(d)$-approximate $ \log(d)^2$-sparse.
While we were not able to prove that a significant fraction of the output distributions are $\epsilon$-approximate $t$-sparse for some $t = \poly(n)$ and $\epsilon = 1/\poly(n)$, the results give some indication that close-to-sparse output distributions could be common for the relevant regime of Permutational Quantum Computing. 

\section{Circuits with Ancillas}
\label{Section:Ancillas}
The proposed simulation technique extends to quantum Schur sampling circuits with ancilla qubits, with transition amplitudes given by:
\begin{align*}
\left(\bra{0}^{k'} \bra{\bm J', M'} \right) W \left(\ket{\bm J, M} \ket{0}^{k}\right),
\end{align*}
for $\bm J' \in \mathcal{A}_{n'}$ and $\bm J \in \mathcal{A}_n$, such that $k'+n' = k+n$. Note that $W \left(\ket{\bm J, M} \ket{0}^{k}\right)$ is CT. 
Since the marginal approximation of Lemma~\ref{Lemma:marginalappx} relies only on approximating overlaps of the form:
\begin{align*}
\left( \bra{\bm j, m} \bra{\phi} \right) U_\textsf{SWAPS} \left( \ket{\phi} \ket{\bm j,m} \right),\end{align*}
where $\ket{\phi}$ is a computationally tractable state, it also extends to marginals:
\begin{align*}
\left( \bra{0}^{k'} \bra{\bm j, m} \bra{\phi} \right) U_\textsf{SWAPS} \left( \ket{\phi} \ket{\bm j,m} \ket{0}^{k'} \right).\end{align*}
since $\ket{\phi} \ket{\bm j,m} \ket{0}^{k'}$ is CT (see~\cite{VanDenNest09} for details). 

We give some evidence that these circuits can give rise to computationally interesting structures, largely inspired by \cite{Jordan08}. Prepare:
\begin{align*}
U_\textsf{Sch} H^{\otimes n} \ket{0}^{\otimes n} = \frac{1}{\sqrt{2^n}} \sum_{\bm J, M} \ket{\bm J, M},
\end{align*}
and consider a classical circuit $W$ that encodes the path $\bm J$ a the Yamanouchi symbol $x$ into an ancilla register. This should done \textit{before} the Schur transform as the $W$ gate is generally controlled in the computational basis. A way to implement this is to use the form of QST which encodes the information about irreps explicitly into the computational basis input at the expense of logarithmic overhead in number of qubits \cite{Harrow05, Kirby17b}. Additionally, compute the value of $J$ to another ancilla register of $\lceil\log n -1\rceil$ qubits, giving the state:
\begin{align*}
 \frac{1}{\sqrt{2^n}} \sum_{\bm J, M} \ket{\bm J, M} \ket{x(\bm J)}\ket{J}.
\end{align*}
Apply the permutation gate $U_\pi$ to the first register. After applying the gate sequence $H^{\otimes n} U_\textsf{Sch}^\dag$ and measuring the first $n$ qubits and the $J$ register, we have that:
\begin{align*} p(\underbrace{0 \ldots 0}_n, J) &= \frac{1}{4^n} \left|\sum_{\bm J} \braket{\bm J| U_\pi | \bm J} \right|^2. \end{align*}
where $\sum_{\bm J}$ runs over all paths that end at $J$. Here $T(\pi) = \sum_{\bm J}\bra{\bm J} U_\pi \ket{\bm J}$ is the trace of $U_\pi$ over the $J$-block, which is (up to a sign) the square of the character of the conjugacy class of $\pi$ of the $S_n$ irrep. defined by $J$. This quantity is known to be $\#\P$-hard by \cite{Hepler94}, so we know that there exist $\pi \in S_n$ for which \textit{exact} computation of $T(\pi)$ becomes intractable under the standard complexity theoretic assumptions. Despite the fact that an efficient classical method for computing \textit{additive} approximations to this quantity was given by \cite{Jordan08} (its existence is in fact a consequence of Theorem~\ref{Theorem:Kushilevitz-Mansour}), it is still possible that its multiplicative approximation retains hardness. This could lead to another class of probability distributions unlikely to be sampled from classically akin to \cite{Bremner11, Bremner16, Aaronson10}. 
 On the discouraging side, limitations of the quantum `Fourier-Schur' sampling in the context of addressing the hidden subgroup problem were identified in \cite{Childs06}.

\section{Discussion}
Circuits using the QST underpin a diverse range of protocols in quantum information processing, from state discrimination to computational models such as Permutational Quantum Computing. While studying the computational power of the transform, we singled out a class of circuits with QST blocks that extend a computationally interesting regime of Permutational Quantum Computing. 
The key result that enabled this analysis was the efficient approximation of quantum Schur sampling circuits studied in~\cite{Havlicek18} as means to characterize its computational power. Building on the work of Schwarz and Van den Nest~\cite{VanDenNest09,Schwarz13}, we showed that large elements of the output distributions can be efficiently found, which precludes the possibility that the circuits could encode quantities that would be hard to classically approximate by taking polynomial number of samples. 
 	We subsequently proved that these circuits can be classically efficiently approximately sampled from if their output distribution becomes sufficiently close to a sparse one. 

Our algorithm is a random walk on the angular momentum branching diagram associated with the computation. One distinctive feature of the algorithm is then that it is not limited to the angular momentum and can be extended to other branching diagrams. It will remain efficient as long as the counterparts of the Clebsch-Gordan coefficients remain efficiently computable to high precision and the out-degree of any vertex of the branching diagram will be bounded by a constant (see also the discussion in~\cite{Havlicek18}). One of the interesting cases where our techniques could apply with little adaptation is the case of q-deformations of the $\textsf{SU}(2)$ branching diagrams, applied in the study of topological phases of matter~\cite{Hormozi07, Fern15}. 

Circuits using similar structure but using an $\textsf{SU}(d)$ Schur-Weyl transformation for $d > 2$ were recently applied in study of Boson Sampling with partially distinguishable bosons in the first quantization~\cite{moylett2018quantum}. The possibility of leveraging the simulation techniques proposed here in this context remains open.

\section*{Acknowledgements} V.H. was supported by Keble de Breyne and Clarendon scholarship at the University of Oxford. S.S. was supported the Leverhulme Early Career Fellowship. K.T. acknowledges support from the IBM Research Frontiers Institute. Authors are grateful to Richard Jozsa and Greg Kuperberg for reading the manuscript and acknowledge discussions with Will Kirby, Alex Moylett and Peter Turner that significantly helped to improve this manuscript.

\bibliographystyle{unsrt}
\bibliography{main}

\onecolumngrid
\appendix

\section{$A \supseteq B$, then $S_A$ and $S_B$ commute} 
 For sets $A, B$ of qubits, $S_A^2$ and $S_B^2$ commute iff $A$ and $B$ are disjoint or one is subset of the other. Setting $A \supseteq B$, we have:
\begin{align*}
\sum_{k \in A} \vec{S}_k &= \overbrace{\sum_{k\in B} \vec{S}_k 
}^{\vec{\alpha}} + \overbrace{\sum_{k \in A / B } \vec{S}_k }^{\vec{\beta}}.
\end{align*}
Note that: 
\begin{align*}
[\vec{\alpha}, \vec{\beta}]  &= 0, & [\vec{\alpha}^2, \vec{\beta}] &= \vec{0}. \end{align*}
This gives:
\begin{align*}
[ S_A^2, S_B^2 ] &= 2[\vec{\alpha} \cdot \vec{ \beta}, \vec{\alpha}^2] = 2[\vec{\alpha} \, \cdot, \vec{\alpha}^2]   \vec{ \beta} = 0.
\end{align*}
\label{App:com1}

 \section{$Z_A$ and $S_B^2$ commute for $A \supseteq B$}
This can be seen by:
\begin{align*} 8 \, [Z_A, S_B^2 ] = 8 \, [Z_B, S_B^2]  &= \sum_{k,l,m \in B} \left[Z_k, \left( X_l X_m + Y_l Y_m + Z_l Z_m \right)\right] \\
&=   \sum_{k,l,m \in B} [Z_k, X_l] X_m + X_l [Z_k, X_m] + [Z_k, Y_l] Y_m + Y_l[Z_k, Y_m] \\
&= 2i  \sum_{k,l \in B} Y_l X_k + X_l Y_k - Y_l X_k - X_l Y_k = 0. \end{align*}
\label{App:com2}

 \section{Diagrammatic representation of the spin basis states}
\label{Appendix:StateDiagrams}

\begin{tikzpicture}[scale = 1]
    \node[circle, inner sep=2pt] at(-.65, 0.75) {$1$};
    \node[circle, inner sep=2pt] at(.65, 0.75) {$3$};
    \node[circle, inner sep=2pt] at(-0.0, 0.75) {$2$};

    \node at(-.9, -0.17) {$j_{[2]} = 0$};
    \node at(-0, -0.9) {$J = \frac{1}{2}, M = \frac{1}{2}$};
    
    \draw[color=gray,thick] (-0.65, 0.6) -- (-0.3, 0.1); 
    \draw[color=gray,thick] (-0, 0.6) -- (-0.3, 0.1); 
    \draw[color=gray,thick] (0.65, 0.6) -- (0.3, 0.1); 
    \draw[color=gray,thick] (-0.3,0.1)  -- (0, -.3);
    \draw[color=gray,thick] (0.3,0.1) -- (0, -0.3);
    \draw[color=gray,thick] (0, -.3) -- (0, -0.7);
    
    \node at(0,-1.6) {$\Ket{J = \frac{1}{2}, M=\frac{1}{2}, j_{[2]} = 0}$};
    
    \node[circle, fill = black,inner sep=1pt] at (-0.3, 0.1){};
    \node[circle, fill = black,inner sep=1pt] at (0, -0.3){};
\end{tikzpicture}
\hspace{.2cm}
\begin{tikzpicture}[scale = 1]
    \node[circle, inner sep=2pt] at(-.65, 0.75) {$1$};
    \node[circle, inner sep=2pt] at(.65, 0.75) {$3$};
    \node[circle, inner sep=2pt] at(-0.0, 0.75) {$2$};

    \node at(-.9, -0.17) {$j_{[2]} = 0$};
    \node at(-0, -0.9) {$J = \frac{1}{2}, M = -\frac{1}{2}$};
    
    \draw[color=gray,thick] (-0.65, 0.6) -- (-0.3, 0.1); 
    \draw[color=gray,thick] (-0, 0.6) -- (-0.3, 0.1); 
    \draw[color=gray,thick] (0.65, 0.6) -- (0.3, 0.1); 
    \draw[color=gray,thick] (-0.3,0.1)  -- (0, -.3);
    \draw[color=gray,thick] (0.3,0.1) -- (0, -0.3);
    \draw[color=gray,thick] (0, -.3) -- (0, -0.7);
    
    \node at(0,-1.6) {$\Ket{J = \frac{1}{2}, M=-\frac{1}{2}, j_{[2]} = 0}$};
    
    \node[circle, fill = black,inner sep=1pt] at (-0.3, 0.1){};
    \node[circle, fill = black,inner sep=1pt] at (0, -0.3){};
\end{tikzpicture}
\hspace{.2cm}
\begin{tikzpicture}[scale = 1]
    \node[circle, inner sep=2pt] at(-.65, 0.75) {$1$};
    \node[circle, inner sep=2pt] at(.65, 0.75) {$3$};
    \node[circle, inner sep=2pt] at(-0.0, 0.75) {$2$};

    \node at(-.9, -0.17) {$j_{[2]} = 1$};
    \node at(-0, -0.9) {$J = \frac{1}{2}, M=\frac{1}{2}$};
    
    \draw[color=gray,thick] (-0.65, 0.6) -- (-0.3, 0.1); 
    \draw[color=gray,thick] (-0, 0.6) -- (-0.3, 0.1); 
    \draw[color=gray,thick] (0.65, 0.6) -- (0.3, 0.1); 
    \draw[color=gray,thick] (-0.3,0.1)  -- (0, -.3);
    \draw[color=gray,thick] (0.3,0.1) -- (0, -0.3);
    \draw[color=gray,thick] (0, -.3) -- (0, -0.7);
    
    \node at(0,-1.6) {$\Ket{J = \frac{1}{2}, M=\frac{1}{2}, j_{[2]} = 1}$};
    \node[circle, fill = black,inner sep=1pt] at (-0.3, 0.1){};
    \node[circle, fill = black,inner sep=1pt] at (0, -0.3){};
\end{tikzpicture}
\hspace{0.2cm}
\begin{tikzpicture}[scale = 1]
    \node[circle, inner sep=2pt] at(-.65, 0.75) {$1$};
    \node[circle, inner sep=2pt] at(.65, 0.75) {$3$};
    \node[circle, inner sep=2pt] at(-0.0, 0.75) {$2$};
    \node at(-.9, -0.17) {$j_{[2]} = 1$};
    \node at(-0, -0.9) {$J =\frac{1}{2}, M=-\frac{1}{2}$};
    
    \draw[color=gray,thick] (-0.65, 0.6) -- (-0.3, 0.1); 
    \draw[color=gray,thick] (-0, 0.6) -- (-0.3, 0.1); 
    \draw[color=gray,thick] (0.65, 0.6) -- (0.3, 0.1); 
    \draw[color=gray,thick] (-0.3,0.1)  -- (0, -.3);
    \draw[color=gray,thick] (0.3,0.1) -- (0, -0.3);
    \draw[color=gray,thick] (0, -.3) -- (0, -0.7);
    
    \node at(0,-1.6) {$\Ket{J = \frac{1}{2}, M=-\frac{1}{2}, j_{[2]} = 1}$};
    \node[circle, fill = black,inner sep=1pt] at (-0.3, 0.1){};
    \node[circle, fill = black,inner sep=1pt] at (0, -0.3){};
\end{tikzpicture}

\medskip

\begin{tikzpicture}[scale = 1]
    \node[circle, inner sep=2pt] at(-.65, 0.75) {$1$};
    \node[circle, inner sep=2pt] at(.65, 0.75) {$3$};
    \node[circle, inner sep=2pt] at(-0.0, 0.75) {$2$};
  
    \node at(-.9, -0.17) {$j_{[2]} = 1$};
    \node at(-0, -0.9) {$J = \frac{3}{2}, M = \frac{3}{2}$};
    
    \draw[color=gray,thick] (-0.65, 0.6) -- (-0.3, 0.1); 
    \draw[color=gray,thick] (-0, 0.6) -- (-0.3, 0.1); 
    \draw[color=gray,thick] (0.65, 0.6) -- (0.3, 0.1); 
    \draw[color=gray,thick] (-0.3,0.1)  -- (0, -.3);
    \draw[color=gray,thick] (0.3,0.1) -- (0, -0.3);
    \draw[color=gray,thick] (0, -.3) -- (0, -0.7);
    
    \node at(0,-1.6) {$\Ket{J = \frac{3}{2}, M=\frac{3}{2}, j_{[2]} = 1}$};
    
    \node[circle, fill = black,inner sep=1pt] at (-0.3, 0.1){};
    \node[circle, fill = black,inner sep=1pt] at (0, -0.3){};
\end{tikzpicture}
\hspace{.2cm}
\begin{tikzpicture}[scale = 1]
    \node[circle, inner sep=2pt] at(-.65, 0.75) {$1$};
    \node[circle, inner sep=2pt] at(.65, 0.75) {$3$};
    \node[circle, inner sep=2pt] at(-0.0, 0.75) {$2$};
 
    \node at(-.9, -0.17) {$j_{[2]} = 1$};
    \node at(-0, -0.9) {$J=\frac{3}{2}, M=\frac{1}{2}$};
    
    \draw[color=gray,thick] (-0.65, 0.6) -- (-0.3, 0.1); 
    \draw[color=gray,thick] (-0, 0.6) -- (-0.3, 0.1); 
    \draw[color=gray,thick] (0.65, 0.6) -- (0.3, 0.1); 
    \draw[color=gray,thick] (-0.3,0.1)  -- (0, -.3);
    \draw[color=gray,thick] (0.3,0.1) -- (0, -0.3);
    \draw[color=gray,thick] (0, -.3) -- (0, -0.7);
    
    \node at(0,-1.6) {$\Ket{J = \frac{3}{2}, M=\frac{1}{2}, j_{[2]} = 1}$};
    
    \node[circle, fill = black,inner sep=1pt] at (-0.3, 0.1){};
    \node[circle, fill = black,inner sep=1pt] at (0, -0.3){};
\end{tikzpicture}
\hspace{.2cm}
\begin{tikzpicture}[scale = 1]
    \node[circle, inner sep=2pt] at(-.65, 0.75) {$1$};
    \node[circle, inner sep=2pt] at(.65, 0.75) {$3$};
    \node[circle, inner sep=2pt] at(-0.0, 0.75) {$2$};
    \node[circle, fill = black,inner sep=1pt] at (-0.3, 0.1){};
    \node[circle, fill = black,inner sep=1pt] at (0, -0.3){};

    \draw[color=gray,thick] (-0.65, 0.6) -- (-0.3, 0.1); 
    \draw[color=gray,thick] (-0, 0.6) -- (-0.3, 0.1); 
    \draw[color=gray,thick] (0.65, 0.6) -- (0.3, 0.1); 
    \draw[color=gray,thick] (-0.3,0.1)  -- (0, -.3);
    \draw[color=gray,thick] (0.3,0.1) -- (0, -0.3);
    \draw[color=gray,thick] (0, -.3) -- (0, -0.7);
    
    \node at(0,-1.6) {$\Ket{J = \frac{3}{2}, M=-\frac{1}{2}, j_{[2]} = 1}$};
    
    \node at(-.9, -0.17) {$j_{[2]} = 1$};
    \node at(-0, -0.9) {$J=\frac{3}{2}, M=-\frac{1}{2}$};
\end{tikzpicture}
\hspace{0.2cm}
\begin{tikzpicture}[scale = 1]
    \node[circle, inner sep=2pt] at(-.65, 0.75) {$1$};
    \node[circle, inner sep=2pt] at(.65, 0.75) {$3$};
    \node[circle, inner sep=2pt] at(-0.0, 0.75) {$2$};
    
    \node at(-.9, -0.17) {$j_{[2]} = 1$};
    \node at(-0, -0.9) {$J=\frac{3}{2}, M=-\frac{3}{2}$};
    
    \draw[color=gray,thick] (-0.65, 0.6) -- (-0.3, 0.1); 
    \draw[color=gray,thick] (-0, 0.6) -- (-0.3, 0.1); 
    \draw[color=gray,thick] (0.65, 0.6) -- (0.3, 0.1); 
    \draw[color=gray,thick] (-0.3,0.1)  -- (0, -.3);
    \draw[color=gray,thick] (0.3,0.1) -- (0, -0.3);
    \draw[color=gray,thick] (0, -.3) -- (0, -0.7);
    
    \node[circle, fill = black,inner sep=1pt] at (-0.3, 0.1){};
    \node[circle, fill = black,inner sep=1pt] at (0, -0.3){};
    
    \node at(0,-1.6) {$\Ket{J = \frac{3}{2}, M=-\frac{3}{2}, j_{[2]} = 1}$};
\end{tikzpicture}
\label{Fig:ExampleStates}

\section{Completeness of the sequentially coupled basis}
The argument comes from \cite{Pauncz67}.
Denote the number of paths in $\mathcal{A}_k$  that end at $j_{[k]}$ by $d(j_{[k]})$. It follows from Eq.~\ref{Eq:couplingSeq} that such $j_{[k]}$ can be reached by taking a step in a path $\bm j_{k-1} \in \mathcal{A}_{k-1}$  that ends  either at $j_{[k-1]}+\frac{1}{2}$  or $j_{[k-1]}-\frac{1}{2}$. This gives a recurrence: 
 \[ d(j_{[k]}) = d\left(j_{[k-1]}-\frac{1}{2}\right) + d\left(j_{[k-1]}+\frac{1}{2}\right),\]
which is solved by:
 \begin{align*} d(J) = {n \choose \frac{1}{2}n -J} - {n \choose \frac{1}{2}n -J -1}.\end{align*}
 The $J$ eigenspaces are $2J +1$-degenerate due to possible values of the $M$ number. We then have that:
 \begin{align} \sum_{\bm J \in \mathcal{A}_n} (2J +1) = \sum_{J=0}^{n} (2J +1) d(J) = 2^n, \label{Eq:SumDim} \end{align}
It follows that eigenstates of $\mathcal{S}_n$ span the $n$-qubit Hilbert space. This also implies that there exist exponentially large blocks for fixed $J$ that asymptotically scale as $2^n$, since the summation in Eq.~\ref{Eq:SumDim} runs only over polynomially many $J$.  In particular, this makes the sampling algorithm of Theorem \ref{Theorem:WeakSampling} run in polynomial time.
\label{App:completeness}

\section{Paths to Young Tableaux}
\label{App:PathsToYD}
Here we show that the paths are one to one with the standard Young tableaux on two rows, which we use to give an improved sampling method in Appendix~\ref{App:GNW}.
Let $\bm J \in \mathcal{A}_n$ be a path and let: 
\begin{align*} x = x_1x_2\ldots x_{n-1} \in \lbrace{0,1}\rbrace^{n-1}, \end{align*} 
be its Yamanouchi symbol. The shape of the corresponding standard Young tableau is determined by $J$ and $n$ - it will have $\frac{n}{2} + J$ boxes in the first row and $\frac{n}{2} - J$ boxes in the second row. Write $1$ to the first box in the upper row, then read the Yamanouchi symbol $x$ from left to right. If the $i$-th bit $x_i = 0$, add an element $i+1$ to the leftmost empty box in the \textit{lower} row. Conversely, if $x_i = 1$, add $i+1$ to the leftmost empty box in the \textit{upper} row. The resulting Young tableau is in the standard form (its elements are increasing both along its rows and columns). The elements in each row are increasing by construction. The elements in each column also increase, which can be seen from the property that any prefix of length $m \leq n-1$ of the Yamanouchi bitstring contains at most $\lceil \frac{m}{2}\rceil$ zeroes -- in other words, the upper row will be always filled faster than the lower one. Paths are also \textit{onto} the standard two-row Young tableaux, which can be proved by converting the tableaux to bitstrings by reversing the above algorithm and checking the defining property of the Yamanouchi symbol. 

As an example, take the sequentially coupled basis state on $n=3$ qubits:
\begin{align*}
\Ket{ J = \frac{1}{2}, M = \frac{1}{2},j_{[2]} = 0} = \Ket{ \bm J, M}.
\end{align*}
with $\bm J = \left[ \frac{1}{2} \rightarrow 0 \rightarrow \frac{1}{2} \right]$. The path ends at $J = \frac{1}{2}$, which means that the corresponding Young diagram will have $2$ boxes in the upper and $1$ in the lower rows: 
\begin{align*}
\yng(2,1)
\end{align*}
The path for this state is $\left[\frac{1}{2} \rightarrow 0 \rightarrow \frac{1}{2} \right]$, which gives a Yamanouchi symbol $\searrow\nearrow = 01$. It also gives a prescription to fill the Young diagram by the above algorithm, giving the tableau: 
\begin{align*} 01 \cong \young(13,2), \end{align*} 
so that the quantum state can be equivalently labeled as:
\begin{align*}
\Ket{ M = \frac{1}{2}, \young(13,2)}
\end{align*}
There is a one-to-one correspondence between the semi-standard Young tableaux of the same shape filled with $\uparrow, \downarrow$ and $M$ -- see \cite{Kirby17b} for discussion of this. However, since $M$ and $n$ completely determine the filling in this case, there is no need to use this here. 

\section{Sampling with the Greene-Nijenhuis-Wilf algorithm}
We now describe how to sample the paths with $n$ steps uniformly randomly using the algorithm proposed by Greene, Nijenhuis and Wilf in \cite{Greene79}. First, fix an endpoint of the path by sampling $J$ from the distribution $\Pi(J) = \frac{2J+1}{2^n} d(J)$ where $d(J)$ is the number of paths that end at $J$, as defined in Appendix~\ref{App:completeness}. Take a two-row Young diagram with $\frac{n}{2}+J$ boxes in the upper and $\frac{n}{2}-J$ in the lower row and use the GNW algorithm to uniformly generate a standard Young Tableaux of this shape - every such tableau is sampled with probability $\frac{1}{d(J)}$ and the sampling algorithm runs in $O(n^2)$ time. Convert the Young tableau to the Yamanouchi symbol and the corresponding path $\bm J$ using Appendix~\ref{App:PathsToYD}. Lastly, choose $M \in \lbrace-J, -J +1, \ldots J\rbrace$ uniformly randomly. The probability of choosing a specific $(\bm J, M)$ is then given by:
\begin{align*}
\Pi(J) \frac{1}{(2J+1)d(J)}  &=  \frac{1}{2^n},
\end{align*}
as wanted. The sampling procedure is then the following: s
\begin{theorem}
\label{Theorem:WeakSampling} Assume that $p(\bm J, M)$ is $\epsilon$-approximate $t$-sparse. It can be sampled classically in $\poly(n, \frac{1}{\epsilon}, t)$ time to $6\epsilon$ error in the total variational distance.
\end{theorem}
\begin{proof}
Use the Kushilevitz-Mansour algorithm in Theorem~\ref{Theorem:Kushilevitz-Mansour} with threshold $\theta = \frac{\epsilon}{t}$ to find $L$ and compute $b = \sum_{\bm J \in L; M} \tilde{p}(\bm J, M)$. Flip a coin with a bias $b$.
\begin{itemize}
\item With probability $b$, output a sample drawn from $\tilde{p}(\bm J, M)/b$ for $\bm J \in L$ and corresponding $M$.
\item With probability $1-b$, output $(\bm J, M)$ for $\bm J \not\in L$ uniformly randomly.
\end{itemize}
To sample $(\bm J,M)$ uniformly randomly,  use the above algorithm to uniformly randomly generate a $(\bm J, M)$ and check if $\bm J \not\in L$. If yes, output. If no, sample again.
\end{proof}

\label{App:GNW}

\section{Simplification of the marginal projector}
\label{App:Projector}
The aim of this section is to simplify the marginal projector expression as:
\begin{align*}
\Pi \left({\bm j}\right) &= \sum_{M} \sum_{\bm J \supseteq \bm j} \Ket{\bm J, M} \Bra{\bm J, M} = \sum_{m} \Ket{ \bm j, m} \Bra{\bm j, m},
\end{align*}
for $\bm j \in \mathcal{A}_k$ and $\sum_{\bm J \supseteq \bm j}$ runs over all paths $\bm J \in \mathcal{A}_n$ that contain $\bm j$. The sum $\sum_m$ runs over $m \in \lbrace -j, -j+1,\ldots j \rbrace$.\\

To do so, we repeatedly use the Clebsch-Gordan orthogonality:
\begin{align*} \sum_{JM} C^{JM}_{jm,j'm_2}C^{JM}_{jm',j'm_2'} &= \delta_{m_2,m_2'}\delta_{mm'}. \end{align*} As we study coupling in the sequential basis, we have that:  
\begin{align*} \sum_{JM} C^{JM}_{jm,m_2}C^{JM}_{jm',m_2'} &= \delta_{m_2,m_2'}\delta_{mm'}. \end{align*} 
We have for the projector $\Pi \left({\bm j}\right)$ that:
\begin{align*}
\Pi \left({\bm j}\right) &= \sum_{M} \sum_{\bm J \supseteq \bm j} \Ket{\bm J, M} \Bra{\bm J, M} \\&= \sum_{\bm J_{n-1} \supseteq \bm j} \sum_{J,M}  \sum_{m_n,m_n'}\sum_{ M_{n-1}, M_{n-1}'} C^{J,M}_{J_{n-1} M_{n-1}; m_n} \ket{m_n} \Ket{\bm J_{n-1}, M_{n-1}} \Bra{\bm J_{n-1}, M_{n-1}'} \bra{m_n'}  C^{J,M}_{J_{n-1}, M_{n-1}'; m_n'} ,
\end{align*}
where $\sum_{\bm J_{n-1} \supseteq \bm j}$ runs over all $\bm J_{n-1} \in \mathcal{A}_{n-1}$ that contain $\bm j$. The $\sum_J$ runs over all allowed $J$. The Clebsch-Gordan coefficients are only non-zero for $J = \left| J_{n-1} \pm \frac{1}{2} \right|$. Using the CG orthogonality, this evaluates to:
\begin{align*}
\Pi \left({\bm j}\right) &= \sum_{\bm J_{n-1} \supseteq \bm j} \sum_{m_n,m_n'} \sum_{ M_{n-1}, M_{n-1}'} \delta_{m_n, m_n'} \delta_{M_{n-1}, M_{n-1}'} \ket{m_n} \Ket{\bm J_{n-1}, M_{n-1}} \Bra{\bm J_{n-1}, M_{n-1}'} \bra{m_n'}  \\
&= \sum_{\bm J_{n-1} \supseteq \bm j} \sum_{M_{n-1}} \underbrace{\sum_{m_n} \ket{m_n}\bra{m_n}}_{\mathbb{I}_{2\times 2}} \otimes \Ket{\bm J_{n-1}, M_{n-1}} \Bra{\bm J_{n-1}, M_{n-1}} = \sum_{M_{n-1}} \sum_{\bm J_{n-1} \supseteq \bm j}  \Ket{\bm J_{n-1}, M_{n-1}} \Bra{\bm J_{n-1}, M_{n-1}}.
\end{align*}
This has the same form as the initial expression, but the projector is now defined by summing over paths with $n-1$ steps. It is possible to continue recursively and write:
\begin{align*}
\Pi \left({\bm j}\right) &= \sum_{M_{n-i}}\sum_{\bm J_{n-i} \supseteq \bm j} \Ket{\bm J_{n-i}, M_{n-i}} \Bra{\bm J_{n-i}, M_{n-i}},
\end{align*}
for any integer $0 \leq i \leq n-k$. For $i = n-k$, one obtains that: 
\begin{align*}
\Pi \left({\bm j}\right) &= \sum_{M_{k}}\sum_{\bm J_{k} \supseteq \bm j} \Ket{\bm J_{k}, M_{k}} \Bra{\bm J_{k}, M_{k}}.
\end{align*}
Since $\bm j \in \mathcal{A}_k$, there is only one path contributing to $\sum_{\bm J_{k} \supseteq \bm j}$, the path $\bm j$ itself. It follows that: 
\begin{align*}
\Pi \left({\bm j}\right) &= \sum_{M_{k}} \Ket{\bm j, M_{k}} \Bra{\bm j, M_{k}}.
\end{align*}
We can write: 
\begin{align*}
\Pi \left({\bm j}\right) &= \sum_{m} \Ket{\bm j, m} \Bra{\bm j, m}.
\end{align*}
where the final summation runs over $m \in \lbrace -j, -j+1,\ldots j \rbrace$.

\end{document}